\newcolumntype{x}[1]{>{\centering\arraybackslash}p{#1}}
\newtheorem{thm}{Theorem}
\newtheorem*{thm*}{Theorem}
\newtheorem*{prop*}{Proposition}
\newtheorem{lemma}[thm]{Lemma}
\newtheorem*{lemma*}{Lemma}
\newtheorem{cor}[thm]{Corollary}
\newtheorem*{cor*}{Corollary}
\newtheorem*{cj*}{Conjecture}
\newtheorem{Def}[thm]{Definition}
\newtheorem*{Def*}{Definition}
\def\thmhead@plain#1#2#3{%
  \thmname{#1}\thmnumber{\@ifnotempty{#1}{ }\@upn{#2}}%
  \thmnote{ {\the\thm@notefont#3}}}
\let\thmhead\thmhead@plain
\theoremstyle{definition}
\newtheorem*{rem}{Remark}
\newtheorem*{note}{Note}
\newcommand{\bb}{\begin{equation}\begin{aligned}}
\newcommand{\bbb}{\begin{equation*}\begin{aligned}}
\newcommand{\ee}{\end{aligned}\end{equation}}
\newcommand{\eee}{\end{aligned}\end{equation*}}
\newcommand\floor[1]{\left\lfloor#1\right\rfloor}
\newcommand\ceil[1]{\left\lceil#1\right\rceil}
\let\textleq\relax
\let\textgeq\relax
\let\texteq\relax
\newcommand{\texteq}[1]{\stackrel{\mathclap{\scriptsize \mbox{#1}}}{=}}
\newcommand{\textleq}[1]{\stackrel{\mathclap{\scriptsize \mbox{#1}}}{\leq}}
\newcommand{\textgeq}[1]{\stackrel{\mathclap{\scriptsize \mbox{#1}}}{\geq}}
\newcommand{\ketbra}[1]{\ket{#1}\!\!\bra{#1}}
\newcommand{\tcb}[1]{{\color{blue} #1}}
\newcommand{\id}{\mathds{1}}
\newcommand{\R}{\mathds{R}}
\newcommand{\N}{\mathds{N}}
\DeclareMathOperator{\Tr}{Tr}
\DeclareMathOperator{\cl}{cl}
\DeclareMathOperator{\co}{conv}
\DeclareMathOperator{\cone}{cone}
\DeclareMathAlphabet{\pazocal}{OMS}{zplm}{m}{n}
\DeclareMathOperator{\supp}{supp}
\newcommand{\HH}{\pazocal{H}}
\newcommand{\T}{\pazocal{T}}
\newcommand{\B}{\pazocal{B}}
\newcommand{\CC}{\pazocal{C}}
\let\C\CC
\newcommand{\Tp}{\pazocal{T}_+}
\newcommand{\D}{\pazocal{D}}
\newcommand{\K}{\pazocal{K}}
\newcommand{\lsmatrix}{\left(\begin{smallmatrix}}
\newcommand{\rsmatrix}{\end{smallmatrix}\right)}
\newcommand\xxrightarrow[2][]{\mathrel{%
  \setbox2=\hbox{\stackon{\scriptstyle #1}{\scriptstyle#2}}%
  \stackunder[4pt]{%
    \xrightarrow{\makebox[\dimexpr\wd2\relax]{$\scriptstyle#2$}}%
  }{%
   \scriptstyle#1\,%
  }%
}}
\newcommand{\tends}[2]{\xxrightarrow[#2]{\mathrm{#1}}}
\newcommand{\tendsn}[1]{\xxrightarrow[\! n\rightarrow \infty\!]{#1}}
\renewcommand{\tends}[2]{\xxrightarrow[#2]{\vphantom{.}\smash{{\raisebox{-1.8pt}{\scriptsize{#1}}}}}}
\def\dyhat{-0.15ex}
\newcommand\myhat[1]{\ThisStyle{%
              \stackon[\dyhat]{\SavedStyle#1}
                              {\SavedStyle\widehat{\phantom{#1}}}}}
\newcommand*\rel@kern[1]{\kern#1\dimexpr\macc@kerna}
\newcommand*\widebar[1]{%
  \begingroup
  \def\mathaccent##1##2{%
    \rel@kern{0.8}%
    \overline{\rel@kern{-0.8}\macc@nucleus\rel@kern{0.2}}%
    \rel@kern{-0.2}%
  }%
  \macc@depth\@ne
  \let\math@bgroup\@empty \let\math@egroup\macc@set@skewchar
  \mathsurround\z@ \frozen@everymath{\mathgroup\macc@group\relax}%
  \macc@set@skewchar\relax
  \let\mathaccentV\macc@nested@a
  \macc@nested@a\relax111{#1}%
  \endgroup
}
\definecolor{Blues5seq1}{RGB}{239,243,255}
\definecolor{Blues5seq2}{RGB}{189,215,231}
\definecolor{Blues5seq3}{RGB}{107,174,214}
\definecolor{Blues5seq4}{RGB}{49,130,189}
\definecolor{Blues5seq5}{RGB}{8,81,156}
\definecolor{Greens5seq1}{RGB}{237,248,233}
\definecolor{Greens5seq2}{RGB}{186,228,179}
\definecolor{Greens5seq3}{RGB}{116,196,118}
\definecolor{Greens5seq4}{RGB}{49,163,84}
\definecolor{Greens5seq5}{RGB}{0,109,44}
\definecolor{Reds5seq1}{RGB}{254,229,217}
\definecolor{Reds5seq2}{RGB}{252,174,145}
\definecolor{Reds5seq3}{RGB}{251,106,74}
\definecolor{Reds5seq4}{RGB}{222,45,38}
\definecolor{Reds5seq5}{RGB}{165,15,21}
\definecolor{Yellow}{RGB}{233, 233, 150}
\definecolor{Orange}{RGB}{255, 230, 179}
\definecolor{Blueish}{RGB}{192, 242, 217}
\definecolor{goldenyellow}{rgb}{1.0, 0.87, 0.0}
\tikzset{cross/.style={cross out, draw=black, minimum size=2*(#1-\pgflinewidth), inner sep=0pt, outer sep=0pt},
%default radius will be 1pt. 
cross/.default={1pt}}
\def\l@subsection#1#2{}
\def\l@subsubsection#1#2{}
\newcommand{\br}[1]{{\color{red!50!yellow!80!black} #1}}
\newcommand{\SEP}{\pazocal{S}}
\newcommand{\PPT}{\pazocal{P\!P\!T}}
\newcommand{\ten}{E^\tau_N}
\newcommand{\tl}{L^\tau}
\newcommand{\tsr}{R^\tau}
\newcommand{\tn}{N_\tau}
\newcommand{\wt}{\widetilde}
\newcommand{\KI}{\mathrm{KE}}
\newcommand{\KIP}{\mathrm{KEP}}
\newcommand{\lset}{\left\{}
\newcommand{\rset}{\right\}}
\renewcommand{\bar}{\,:\,}
\newcommand{\idc}{\mathrm{id}}
\newcommand{\HS}{\pazocal{H\!S}}
\newcommand{\deff}[1]{\textbf{\emph{#1}}}
\let\epsilon\varepsilon
\DeclareMathAlphabet\mathbfcal{OMS}{cmsy}{b}{n}
\let\br\relax
\let\tcb\relax
\begin{document}

\title{Computable lower bounds on the entanglement cost of quantum channels}

\author{Ludovico Lami}
\email{ludovico.lami@gmail.com}
\affiliation{Institut f\"{u}r Theoretische Physik und IQST, Universit\"{a}t Ulm, Albert-Einstein-Allee 11, D-89069 Ulm, Germany}
\affiliation{QuSoft, Science Park 123, 1098 XG Amsterdam, the Netherlands}
\affiliation{Korteweg--de Vries Institute for Mathematics, University of Amsterdam, Science Park 105-107, 1098 XG Amsterdam, the Netherlands}
\affiliation{Institute for Theoretical Physics, University of Amsterdam, Science Park 904, 1098 XH Amsterdam, the Netherlands}

\author{Bartosz Regula}
\email{bartosz.regula@gmail.com}
\affiliation{Department of Physics, Graduate School of Science, The University of Tokyo, Bunkyo-ku, Tokyo 113-0033, Japan}

\begin{abstract}
    A class of lower bounds for the entanglement cost of any quantum state was recently introduced in [\href{https://www.nature.com/articles/s41567-022-01873-9}{Nat.\ Phys.\ 19, 184 (2023)}] in the form of entanglement monotones known as the tempered robustness and tempered negativity. Here we extend their definitions to point-to-point quantum channels, establishing a lower bound for the asymptotic
    entanglement cost of any channel, whether finite or infinite dimensional. This leads, in particular, to a bound that is computable as a semidefinite program and that can outperform previously known lower bounds, including ones based on quantum relative entropy.
    In the course of our proof we establish a useful link between the robustness of entanglement of quantum states and quantum channels, which requires several technical developments such as showing the lower semicontinuity of the robustness of entanglement of a channel in the weak*-operator topology on bounded linear maps between spaces of trace class operators.
\end{abstract}

\maketitle

%%%%%%%%%%%%%%%%%%%%%%%%%%%%%%%%%%%%%%%%%%%%%%%%%%%%%%%%%%%%%%%%%%%%%%%%%%%%%%%%%%%%%%%%%%%%%%

\section{Introduction} \label{intro_channels_sec}

From a modern perspective, one of the fundamental conceptual contributions of classical thermodynamics~\cite{CARNOT, Clausius, Kelvin, PLANCK} is the realisation that information itself is an entity with observable physical consequences. The advent of quantum mechanics and the epistemological revolution it brought~\cite{Planck1901, Heisenberg1925, Born-Jordan, Born-Heisenberg-Jordan} have endowed this statement with a more profound meaning. On the other hand, information theory~\cite{Shannon} has taught us that information can only be understood by means of the operational tasks it enables. In this sense, it is only natural that a key role in the discipline is played by the processes that allow to manipulate information. In the theory of quantum information~\cite{Bennett-distillation, Bennett-distillation-mixed, NC}, which aims to combine quantum physics and information theory, such processes are represented by quantum channels~\cite{Stinespring, Choi}.

Quantum information is concerned, among other things, with how resources can be interconverted~\cite{Church,RT-review}. In this spirit, a great effort has been devoted to the problem of understanding how quantum channels can be transformed into each other. For example, given many uses of a point-to-point quantum channel $\Lambda:A\to B$ connecting Alice's system to Bob's system, one is often interested in determining how much information Alice can transmit to Bob --- the quantum capacity of the channel~\cite{Lloyd-S-D, L-Shor-D, L-S-Devetak, temaconvariazioni}. Equivalently, this problem can be thought of as that of understanding how efficiently one can simulate the noiseless qubit identity channel $\idc_2$ given $\Lambda$.

While this question has received much attention in the last decades, its converse, i.e.\ the problem of determining the rate at which resources are needed to simulate a given (noisy) quantum channel $\Lambda$, although conceptually appealing, is much less studied. Here, the word `resources' can take, depending on the context, different meanings, with the two main ones referring to entanglement and classical communication. For example, a notable result in this area is the quantum reverse Shannon theorem, stating that in the presence of free entanglement the classical communication cost of a quantum channel is equal to its entanglement-assisted capacity, i.e.\ to the amount of classical communication that it could have conveyed in the first place, again in the presence of free entanglement~\cite{Bennett2002, Bennett2014, Berta2011}. While this result is aesthetically pleasing because it establishes a reversible theory, it is not easy to imagine situations in which quantum entanglement, notoriously hard to maintain over long distances, could be counted as a free resource. A complementary approach, instead, is to consider classical communication free and to look primarily at the cost in terms of entanglement consumption. These considerations have inspired the notion of entanglement cost for quantum channels~\cite{Berta2013, Wilde2018}. As is the case for states~\cite{Hayden-EC, Shor2004, Hastings2008}, no single-letter (let alone closed-form) expression is known for the entanglement cost of a quantum channel. In the channel case, the situation is in fact even more intricate than for states, because dynamical resources can be used in a sequential order, where each channel use can influence the subsequent ones.
Therefore, one can identify at least two notions of entanglement cost of a quantum channel, one corresponding to what is needed to simulate parallel instances~\cite{Berta2013}, and the other encompassing possible overheads required for sequential simulation~\cite{Wilde2018}. In fact, even more general schemes for the transformations of quantum channels can be conceived by allowing more exotic protocols that do not assume a fixed causal order of the channel uses~\cite{Chiribella-switch,oreshkov_2012}.

In this work, we focus on the challenging problem of computing lower bounds to the entanglement cost of quantum channels. The fundamental mathematical difficulty associated with this problem is, as for states, the absence of a single-letter formula. In the channel setting, however, further complications linked to the optimisation over entangled input states over many uses of the channel may arise~\cite[Eq.~(1)]{Berta2013}. Here we bypass these difficulties by generalising the tempering method recently introduced by us~\cite{lami_2021-1} to the dynamical setting of quantum channels. Our fundamental result is a semidefinite-programming--computable lower bound on the parallel (and hence also on more general notions, such as the sequential) entanglement cost of a channel in terms of a quantity called the tempered negativity. We show with an example that our new result can improve upon all previously known lower bounds. Building on these findings, we conclude by providing a complete proof of the result announced in~\cite{lami_2021-1} that the theory of point-to-point quantum channel manipulation is fundamentally irreversible under the set of all channel-to-channel transformations that preserve either the set of entanglement-breaking channels, or that of channels with a positive partial transpose.

The rest of the paper is structured as follows. We begin in Section~\ref{sec:prelim} with an introduction to the concepts underlying the investigation of quantum channels and quantum entanglement, and recall the tempering method for quantum states developed in~\cite{lami_2021-1}. Section~\ref{sec:topology} then deals with the problem of how to choose a suitable topology on the space of quantum channels to study their properties: as we show, the most appropriate choice here is an often overlooked weak*-operator topology. Section~\ref{sec:main_bounds} contains the main results of our work: here we rigorously introduce the notions of quantum capacity and quantum channel entanglement cost, generalise the tempering method to quantum channels, and then use the tempered monotones to provide a new lower bound on the entanglement cost of any channel. In Section~\ref{sec:irrev}, we show that the bound can perform better than previously known computable bounds for entanglement cost, and prove the general asymptotic irreversibility of channel manipulation. Our last Section~\ref{sec:robustness_ch_proof} is devoted to a complete proof of one of our technical results used in Section~\ref{sec:main_bounds}, namely the equivalence of the measure of entanglement known as the robustness for states and channels.

\section{Preliminaries}\label{sec:prelim}

\subsection{Quantum systems}

Quantum systems, denoted with capital letters $A$, $B$, etc.\ are mathematically represented by separable\footnote{A Hilbert space, or more generally a Banach space, is said to be separable if it admits a countable norm-dense subset.} Hilbert spaces $\HH_A$, $\HH_B$, and so on. In this paper we shall consider the fully general case of infinite-dimensional spaces, which is arguably the most fundamental --- in fact, all quantum fields that we suspect to model the fundamental constituents of matter are intrinsically infinite-dimensional.

The Banach space of all bounded operators on a Hilbert space $\HH$, equipped with the operator norm $\|X\|_\infty\coloneqq \sup_{\ket{\psi}\in \HH\setminus \{0\}} \frac{\left\|X\ket{\psi}\right\|}{\left\|\ket{\psi}\right\|}$ will be denoted with $\B(\HH)$. We can think of it as the dual of the space of trace class operators on $\HH$ endowed with the trace norm $\|T\|_1\coloneqq \Tr \sqrt{T^\dag T}$, denoted with $\T(\HH)$~\cite[Chapter~VI]{REED}. The duality relation between $\B(\HH)$ and $\T(\HH)$ will be written $\B(\HH) = \T(\HH)^*$. Remarkably, we can in turn think of $\T(\HH)$ as the dual of a Banach sub-space of $\B(\HH)$, that of compact operators on $\HH$, denoted with $\CC(\HH)$, once again equipped with the operator norm. In between $\T(\HH)$ and $\B(\HH)$ lies the Hilbert space of Hilbert--Schmidt operators on $\HH$, called $\HS(\HH)$ equipped with the scalar product $\braket{X,Y}_{\mathrm{HS}}\coloneqq \Tr X^\dag Y$. The fact that $\HS(\HH)$ is actually a Hilbert space rather than simply a Banach space makes it somewhat easier to work with. \tcb{For example, $\HS(\HH)$ can be identified with its own dual.} We summarise the above discussion by stating that the relations
\bb
\tcb{\CC(\HH)^* = \T(\HH) \subseteq \HS(\HH) = \HS(\HH)^* \subseteq \CC(\HH) \subseteq \B(\HH) = \T(\HH)^{*} = \CC(\HH)^{**}}
\ee
hold; here, the inclusions are intended to be between sets (and not Banach spaces), and are all strict unless $\dim\HH<\infty$.

If $\HH$ is separable, as we will always assume, both $\C(\HH)$ and $\T(\HH)$ --- but \emph{not} $\B(\HH)$! --- can be shown to be separable as well (as Banach spaces). The separability of $\T(\HH)$ can be proved simply by taking as a dense subset the set of all operators having a finite expansion with rational coefficients in a fixed orthonormal basis; since separability of the dual space implies separability of the primal~\cite[Theorem~III.7]{REED}, it follows immediately that $\C(\HH)$ is also separable; also, it turns out that $\B(\HH)$ is not separable whenever $\HH$ is infinite dimensional~\cite{inseparability-bounded}.

\subsection{Quantum channels}

Mathematically, a \deff{quantum channel} from a quantum system $A$ to a quantum system $B$, denoted by $\Lambda:A\to B$, is first and foremost a linear map $\Lambda: \T(\HH_A) \to \T(\HH_B)$. In order to be a bona fide quantum channel, $\Lambda$ must satisfy two additional conditions:
\begin{enumerate}[(i)]
\item Complete positivity, which requires that $\idc_n\otimes \Lambda : \T(\C^n \otimes \HH_A) \to \T(\C^n \otimes \HH_B)$ is a positive map for all $n\in \N_+$, where $\idc_n$ denotes the identity map acting on the space of $n\times n$ complex matrices, and positivity of a map $\Gamma$ means that $\Gamma(X)\geq 0$ is positive semidefinite for all positive semidefinite $X\geq 0$.
\item Trace preservation, which requires that the identity $\Tr \Lambda(X) = \Tr X$ is obeyed for all $X\in \T(\HH_A)$.
\end{enumerate}
In what follows, the set of maps satisfying (i)--(ii) will be denoted with $\mathrm{CPTP}_{A\to B}$.

\subsection{Separability and the PPT criterion}

The Hilbert space associated with a bipartite quantum system $AB$ is simply the tensor product of the local spaces, in formula $\HH_{AB} = \HH_A\otimes \HH_B$. A very important set of states within $\D(\HH_{AB})$ is composed of \deff{separable states}, formally defined as the closed convex hull of product states, i.e.
\bb
\SEP^{1}_{AB} \coloneqq \cl\left( \co\left\{ \ketbra{\psi}_A \otimes \ketbra{\phi}_B:\, \ket{\psi}_A\in \HH_A,\, \ket{\phi}_B\in \HH_B,\, \braket{\psi|\psi}=1=\braket{\phi|\phi} \right\} \right) .
\ee
Here, the closure is taken with respect to the trace norm topology (see Section~\ref{sec:topology} for an introduction to topologies for quantum systems). It can be shown~\cite{Holevo2005} that a state $\sigma_{AB}$ is separable if and only if it admits the expression
\begin{equation}
    \sigma_{AB} = \int \ketbra{\psi}_A \otimes \ketbra{\phi}_B\, \mathrm{d}\mu(\psi,\phi)\, ,
    \label{separable}
\end{equation}
where $\mu$ is a Borel probability measure on the product of the sets of local (normalised) pure states. The cone generated inside $\Tp(\HH_{AB})$ by the set of separable states is
\begin{equation}
    \SEP_{AB} \coloneqq \cone \left( \SEP^{1}_{AB} \right) \coloneqq \left\{ \lambda \sigma_{AB}:\, \lambda\geq 0,\, \sigma_{AB}\in \SEP^{1}_{AB}\right\} .
\label{SEP}
\end{equation}

Since deciding whether a state is separable or not is a notoriously intractable problem~\cite{GurvitsNPhard, GharibianNPhard}, some handy criteria have been developed to facilitate this task. The most notable of those is the positive partial transposition (PPT) criterion~\cite{PeresPPT}. The \deff{partial transpose} of some $T_{AB}\in \T(\HH_{AB})$, denoted $T_{AB}^\Gamma$, is defined by first assuming that $T_{AB} = X_A \otimes Y_B$, in which case $T_{AB}^\Gamma = X_A \otimes Y_B^\intercal$, the transposition being with respect to a fixed (but immaterial) basis of $\HH_B$, and then extending the operation to the whole $\T(\HH_{AB})$ by linearity. It is worth observing that the resulting operator $T_{AB}^\Gamma\in \B(\HH_{AB})$ will be bounded but in general not of trace class. We refer the reader to~\cite[Section~VII.A, Supplementary Information]{lami_2021-1} for further details on some subtleties concerning the infinite-dimensional case. We can now observe that the partial transpose of any separable state is necessarily a positive semidefinite operator. Therefore,
\begin{equation}
    \SEP_{AB} \subseteq \PPT_{\!AB} \coloneqq \left\{ T_{AB}\in \Tp(\HH_{AB}):\ T_{AB}^\Gamma\geq 0\right\} ,
\label{PPT}
\end{equation}
which is precisely the aforementioned PPT criterion.

\begin{note}
Hereafter we will denote with $\K_{AB}$ any one of the two cones $\SEP_{AB}$ or $\PPT_{\!AB}$, defined by~\eqref{SEP} and~\eqref{PPT}, respectively. Therefore, a statement involving $\K$ will be intended to hold equally well for $\K=\SEP_{AB}$ or $\K=\PPT_{\!AB}$.
\end{note}

\subsection{Robustness, negativity, and tempering}

From now on all states are implicitly understood to be on a bipartite system $AB$, even though we will often omit the subscripts. Given a state $\rho=\rho_{AB}$, how to quantify its entanglement content? A particularly simple and arguably fruitful idea is to use its \deff{(standard) $\mathbfcal{K}$-robustness}, defined by
\bb
R_\K^s(\rho) \coloneqq&\ \inf \left\{ \Tr \delta:\, \delta \in \K,\, \rho+\delta\in \K \right\} \\
=&\ \frac12 \left( \sup\left\{ \Tr X\rho:\ X\in \left[ -\id, \id\right]_{\K^*} \right\} - 1 \right)
\label{std_rob}
\ee
In the above equation, $\delta$ is assumed to be of trace class, the dual cone $\K^*$ is defined by
\bb
\K^* &\coloneqq \left\{ Z_{AB} = Z_{AB}^\dagger \in \B\left( \HH_{AB}\right):\ \Tr[Z_{AB}W_{AB}]\geq 0\quad \forall\ W_{AB}\in \K_{AB} \right\} \subset \B(\HH_{AB})\, , \label{dual_cone}
\ee
and the corresponding operator interval is
\bb
\left[ -\id, \id\right]_{\K^*} &\coloneqq \left\{ Z_{AB} = Z_{AB}^\dag \in \B\left( \HH_{AB}\right):\ \left|\Tr[Z_{AB}W_{AB}]\right|\leq \Tr W_{AB}\quad \forall\ W_{AB}\in \K_{AB} \right\} . \label{operator_interval}
\ee
Note that $R_\PPT^s(\rho)\leq R_\SEP^s(\rho)$ for all states $\rho$, simply because of the inclusion in~\eqref{PPT}. Let us observe in passing that in this paper we use the original definition of Vidal and Tarrach~\cite{VidalTarrach}, instead of adopting the more recent convention of Ref.~\cite{taming-PRA, taming-PRL}, according to which the robustness would be defined as $1+R_\K$.

A simpler quantity to compute is the \deff{negativity}, defined by~\cite{negativity, plenioprl}
\bb
N(\rho) \coloneqq \left\|\rho^\Gamma\right\|_1 = \sup\left\{ \Tr X\rho:\, X = X^\dagger\!,\ \left\|X^\Gamma\right\|_\infty \leq 1 \right\} .
\label{negativity}
\ee
It is also useful to consider its logarithmic version, the \deff{logarithmic negativity}, defined by
\begin{equation}
    E_N(\rho) \coloneqq \log_2 \left\|\rho^\Gamma\right\|_1 
    %= \log_2 \sup\left\{ \Tr X\rho:\, \left\|X^\Gamma\right\|_\infty\leq 1 \right\} .
    \label{logarithmic_negativity}
\end{equation}
Note that the convention that we are adopting here differs slightly from the one employed by Vidal and Werner~\cite{negativity}, who take the negativity to be $\frac12 \left( N(\rho) - 1\right)$. Our logarithmic negativity, instead, is the same as in~\cite{negativity}.

In~\cite{lami_2021-1}, we introduced a technique called `tempering' that can be applied to yield modified versions of the robustness and the negativity. For a pair of states $\rho,\omega$ on a bipartite system $AB$, the \deff{\textbf{$\boldsymbol{\omega}$-tempered $\mathbfcal{K}$-robustness}} is defined by
\begin{align}
1+2\tsr_{\!\K}(\rho | \omega) &\coloneqq \sup\left\{ \Tr X\rho:\ X\in \left[ -\id, \id\right]_{\K^*},\ \|X\|_\infty=\Tr X\omega \right\} , \label{tsr} \\
\tsr_{\!\K} (\rho ) &\coloneqq \tsr_{\!\K}(\rho|\rho)\, , \label{stsr}
\end{align}
where $\left[ -\id, \id\right]_{\K^*}$ is given by~\eqref{operator_interval} (cf.~\eqref{std_rob}). 
Clearly, this is a convex program, and even a semidefinite one (a.k.a.\ SDP) for the special case $\K=\PPT$.

Analogously, the \deff{$\boldsymbol{\omega}$-tempered negativity} defined by
\begin{align}
\tn(\rho | \omega) &\coloneqq \sup\left\{ \Tr X\rho:\, X = X^\dagger\!,\ \left\|X^\Gamma\right\|_\infty \leq 1,\ \|X\|_\infty=\Tr X\omega \right\} , \label{tn} \\
\tn(\rho) &\coloneqq \tn(\rho|\rho)\, . \label{stn}
\end{align}
Compare this with~\eqref{negativity}. 
Once again, the expression in~\eqref{stn} is in fact an SDP. The corresponding \deff{tempered logarithmic negativity} is
\bb
\ten(\rho) \coloneqq \log_2 \tn(\rho)\, .
\label{ten}
\ee
For a survey of the properties of the tempered robustness and negativity we refer the reader to~\cite[Proposition~S5]{lami_2021-1}. The main application of this quantity in~\cite{lami_2021-1} was to establish a universal and computable lower bound on the entanglement cost of any quantum state ---  the generalisation of this result is precisely the aim of this work.

\section{The unjustly overlooked weak*-operator topology}\label{sec:topology}

Before proceeding with the investigation of entanglement of quantum channels, let us address an issue pertinent to the study of their properties: the choice of a suitable topology on the space of quantum channels.
The purpose of this section is to introduce and discuss the notion of the weak*-operator topology, which will prove instrumental to some of our proofs.
%As we will see, the choice of such a topology will prove instrumental to some of our proofs.

\subsection{Topologies on the set of quantum states}

%Before delving into the world of channels, let us consider topologies at the level of states.
Let us begin by considering topologies at the level of states. The space of interest here is therefore $\T(\HH)$, i.e.\ the Banach space of trace class operators on some separable Hilbert space $\HH$. We will consider mainly two topologies on $\T(\HH)$, namely:
\begin{itemize}
    \item The trace norm topology, induced by the native norm $\|\cdot\|_1$. A sequence $(T_n)_{n\in \N}$ of trace class operators is said to converge with respect to the trace norm topology to some $T\in \T(\HH)$, and we write $T_n\tends{tn}{n\to\infty} T$, if $\left\|T_n - T\right\|_1 \tends{}{n\to\infty} 0$.
    \item The weak* topology induced by the duality $\T(\HH) = \C(\HH)^*$, where $\C(\HH)$ denotes the space of compact operators on $\HH$. Equivalently, it can be defined as the coarsest topology that makes all functionals of the form $T \mapsto \Tr TK$, where $K\in \C(\HH)$, continuous. Accordingly, a sequence\footnote{Strictly speaking, since the weak* topology is non-metrisable in general, we should be talking about nets rather than sequences. However, we will see that this technical complication can be avoided in most cases of interest here.} $(T_n)_{n\in \N}$ in $\T(\HH)$ will be said to converge to $T\in \T(\HH)$ with respect to the weak* topology, denoted $T_n\tends{w*}{n\to\infty} T$, if $\Tr T_n K \tends{}{n\to\infty} \Tr TK$ for all compact $K\in \C(\HH)$. 
\end{itemize}
Clearly, the weak* topology is coarser than the trace norm topology, which implies that any sequence that converges with respect to the former topology converges (to the same limit) also with respect to the latter.

Trace norm and weak* topology are however genuinely different in infinite dimension. For example, picking an orthonormal basis $\{\ket{n}\}_{n\in \N}$ of $\HH$, one sees that the sequence $\left( \ketbra{n}\right)_{n\in \N}$ has no limit with respect to the former topology, yet it satisfies $\ketbra{n} \tendsn{w*} 0$.

A word of caution before proceeding is advisable: the weak* topology considered here is \emph{not} the same commonly used in the von~Neumann algebra approach to quantum theory~\cite[Remark~2]{achievability}.

The fundamental reason why the weak* topology is so useful to us lies in the Banach--Alaoglu theorem, which states that \emph{for every Banach space $X$, the unit ball of $X^*$ is weak*-compact}~\cite[Theorem~IV.21]{REED}. In our operator setting, by applying this result to the duality $\T(\HH) = \C(\HH)^*$ we immediately deduce the following:

\begin{lemma} \label{Banach-Alaoglu_trace_class}
The unit ball $B_1\coloneqq \{T\in \T(\HH) : \|T\|_1\leq 1\}$ is weak*-compact.
\end{lemma}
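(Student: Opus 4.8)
The plan is to obtain the statement as an immediate corollary of the Banach--Alaoglu theorem, exploiting the duality $\T(\HH) = \C(\HH)^*$ recalled in the preliminaries. Concretely, I would set $X \coloneqq \C(\HH)$, the Banach space of compact operators equipped with the operator norm, so that $X^* = \T(\HH)$ under the pairing $(K,T)\mapsto \Tr[TK]$. Banach--Alaoglu then asserts that the closed unit ball of $X^*$ is compact in the weak* topology of $X^*$, and it remains only to check that this unit ball is precisely $B_1$ and that the weak* topology appearing there is the one under which the lemma is stated.

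The single point that genuinely requires verification --- and which I would make explicit --- is that the trace norm $\|\cdot\|_1$ coincides with the dual norm induced on $\T(\HH) = \C(\HH)^*$ by the operator norm on $\C(\HH)$, i.e.\ the standard identity
\[
\|T\|_1 = \sup\left\{ \left| \Tr[TK] \right| : K\in \C(\HH),\ \|K\|_\infty \le 1 \right\}.
\]
Here the bound $\|T\|_1 \geq \sup$ is just the Hölder-type inequality $\left|\Tr[TK]\right|\le \|T\|_1\|K\|_\infty$, while $\|T\|_1 \le \sup$ is obtained by taking $K$ to run over finite-rank (hence compact) truncations of the partial isometry $U^\dagger$ from the polar decomposition $T = U|T|$, for which $\Tr[TK]$ approaches $\Tr|T| = \|T\|_1$. (One cannot in general use $U^\dagger$ itself, as a partial isometry need not be compact in infinite dimension, which is exactly why the truncation is needed.)

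Granting this norm identification, the closed unit ball $\{f\in \C(\HH)^* : \|f\| \le 1\}$ supplied by Banach--Alaoglu is exactly $B_1$, and by construction the weak* topology arising from the duality $\T(\HH) = \C(\HH)^*$ is the coarsest topology making every functional $T\mapsto \Tr[TK]$, $K\in \C(\HH)$, continuous --- precisely the weak* topology fixed in Section~\ref{sec:topology}. Hence the compactness delivered by Banach--Alaoglu is compactness in the intended topology, and the lemma follows. There is no substantial obstacle here: the only nontrivial ingredient is the routine duality identity above, and everything else is bookkeeping to match the abstract statement of Banach--Alaoglu to the operator setting.
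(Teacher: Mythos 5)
Your proof is correct and takes essentially the same route as the paper: the paper also obtains the lemma as an immediate consequence of the Banach--Alaoglu theorem applied to the duality $\T(\HH)=\C(\HH)^*$. The only difference is that you make explicit the routine verification that $\|\cdot\|_1$ is the dual norm induced by the operator norm on $\C(\HH)$, a fact the paper absorbs into the duality relation stated in its preliminaries.
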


\subsection{Topologies on the set of quantum channels}

We now move on to the discussion of topologies on spaces of quantum channels. For the sake of this presentation, let us fix two quantum systems $A$ and $B$, and let us use the shorthand notation $\T_A\coloneqq \T(\HH_A)$ and $\T_B\coloneqq \T(\HH_B)$ for the respective spaces of trace class operators. Quantum channels can be thought of as elements of the Banach space $\B\left(\T_A\to \T_B \right)$ of linear maps $\Lambda:\T_A\to \T_B$ that are bounded with respect to the trace norm, i.e.\ that satisfy $\|\Lambda\|_{1\to 1} \coloneqq \sup_{X\in \T_A,\, \|X\|_1\leq 1} \left\|\Lambda(X)\right\|_1<\infty$. We can turn $\B\left(\T_A\to \T_B \right)$ into a Banach space by equipping it with the norm $\|\cdot\|_{1\to 1}$. As it turns out, every completely positive and trace preserving map, and hence any quantum channel, belongs to $\B\left(\T_A\to \T_B \right)$; moreover, its norm is precisely $1$.

When it comes to the choice of a topology on $\B\left(\T_A\to \T_B \right)$, there are several possibilities. The most common choices are however essentially two:
\begin{itemize}
    \item The diamond norm topology, induced by \tcb{a norm alternative to $\|\cdot\|_{1\to 1}$ called the diamond norm (completely bounded trace norm). This is given by~\cite{kitaev_1997}
    \begin{equation}\begin{aligned}
    \|\Lambda\|_\Diamond \coloneqq \sup_{\rho} \left\| \left[\idc \otimes \Lambda\right](\rho) \right\|_1,
    \label{diamond}
    \end{aligned}\end{equation}
    with the optimisation being over all bipartite quantum states $\rho \in \D(\HH_A \otimes \HH_{A})$ on two copies of the Hilbert space of Alice's system\footnote{The diamond norm is sometimes defined through an optimisation over only pure state $\psi \in \D(\HH_A \otimes \HH_A)$ or over all states $\rho \in \D(\HH_C \otimes \HH_A)$ with system $C$ arbitrary; all such notions are equivalent~\cite{gilchrist_2005}.}. This distance represents a natural extension of the trace distance to quantum channels, obeying an equivalent of the \br{Helstrom--Holevo} theorem: the diamond norm distance between any two quantum channels captures the difficulty in distinguishing them operationally~\cite{Sacchi2005}.

    According to the diamond norm} topology, a sequence $(\Lambda_n)_{n\in \N}$ in $\B\left(\T_A\to \T_B \right)$ is said to converge to $\Lambda\in \B\left(\T_A\to \T_B \right)$ if $\left\|\Lambda_n - \Lambda\right\|_\diamond \tendsn{} 0$. This is essentially the choice made in the definitions~\eqref{ch_distillable}--\eqref{ch_cost}.
    
    \item Since the above topology turns out to be too strong for many purposes\tcb{, most notably when dealing with infinite-dimensional systems~\cite{PLOB, Shirokov2018, VV-diamond},} it is customary to employ also the strong operator topology, induced by the family of semi-norms $\Lambda \mapsto \left\|\Lambda(X)\right\|_1$, for all $X\in \T_A$. This implies that a sequence of channels $\left(\Lambda_n\right)_{n\in \N}$ in $\B\left(\T_A\to \T_B \right)$ converges to $\Lambda$ with respect to the strong operator topology, and we write $\Lambda_n\tendsn{so} \Lambda$, if and only if $\left\|\Lambda_n(X) - \Lambda(X) \right\|_1\tendsn{} 0$ for all $X\in \T_A$.
\end{itemize}
Despite the name, the strong topology is actually weaker (i.e.\ coarser) than the diamond norm topology. And still, for what we have in mind it is too strong. In order to exploit the power of the Banach--Alaoglu theorem, we need to devise a version of the weak* topology that applies to the channel setting. The simple solution is the following.

\begin{Def}
The \emph{weak*-operator topology} on $\B\left(\T_A\to \T_B \right)$ is defined as the coarsest topology that makes all functionals $\Lambda \mapsto \Tr\left[ Y \Lambda(X) \right]$ continuous, where $X\in \T_A$ and $Y\in \CC_B$, and $\CC_B$ denotes the Banach space of compact operators on $\HH_B$ equipped with the operator norm.
\end{Def}

We immediately see that a sequence $(\Lambda_n)_{n\in\N}$ in $\B\left(\T_A\to \T_B \right)$ converges to $\Lambda\in \B\left(\T_A\to \T_B \right)$ with respect to the weak*-operator topology, which we will write $\Lambda_n \tendsn{w*o} \Lambda$, if and only if $\Lambda_n(X) \tendsn{w*} \Lambda(X)$ for all $X\in \T_A$.

\begin{rem}
The use of the weak* topology in the context of quantum resource theories of states was explored in~\cite{taming-PRA, taming-PRL, nonclassicality, achievability}. Building on that, the possibility of extending these concepts to spaces of quantum channels was considered in~\cite{Haapasalo2021}. Related topologies, such as the bounded weak topology of~\cite{arveson_1969,paulsen_2002}, appeared in the literature before, but they have been employed in a rather different way --- as is customary in the von~Neumann algebra community, they were defined for sets of unital maps, which can be considered as adjoints of quantum channels, acting on operators in $\B(\HH)$ rather than $\T(\HH)$ (i.e.\ the Heisenberg picture).
\end{rem}

With these tools at hand, we can now obtain the following.

\begin{lemma} \label{Xi_weak*op_compact_lemma}
The unit ball of $\B\left(\T_A\to \T_B \right)$, i.e.\ the set
\begin{equation}
\Xi \coloneqq \left\{ \Lambda:\T_A\to \T_B:\, \left\|\Lambda(X)\right\|_1\leq \|X\|_1\quad \forall\, X\in \T_A \right\} ,
\label{Xi}
\end{equation}
is compact with respect to the weak*-operator topology.
\end{lemma}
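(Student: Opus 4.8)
The plan is to realise $\Xi$ as a closed subspace of a product of weak*-compact balls and then to invoke Tychonoff's theorem. For each $X\in\T_A$ let $\mathcal{B}_X\coloneqq\{T\in\T_B:\|T\|_1\leq\|X\|_1\}$ denote the closed trace-norm ball of radius $\|X\|_1$ in $\T_B$, equipped with the weak* topology inherited from the duality $\T_B=\C_B^*$. By Lemma~\ref{Banach-Alaoglu_trace_class} applied to $\HH_B$ (and rescaled, using that multiplication by a fixed scalar is a weak* homeomorphism), each $\mathcal{B}_X$ is weak*-compact; the degenerate case $\mathcal{B}_0=\{0\}$ is trivially compact. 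Hence, by Tychonoff's theorem, the product $P\coloneqq\prod_{X\in\T_A}\mathcal{B}_X$ is compact in the product topology.

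First I would introduce the evaluation map $\Phi:\Xi\to P$, $\Phi(\Lambda)\coloneqq(\Lambda(X))_{X\in\T_A}$. This is well defined precisely because $\Lambda\in\Xi$ forces $\|\Lambda(X)\|_1\leq\|X\|_1$, so that $\Lambda(X)\in\mathcal{B}_X$ for every $X$. By construction the weak*-operator topology on $\Xi$ is the initial (coarsest) topology generated by the maps $\Lambda\mapsto\Lambda(X)$ into $\T_B$ with its weak* topology, while the product topology on $P$ is the initial topology generated by the coordinate projections; since $\Phi$ intertwines these two families and the weak* topology on $\T_B$ is itself the coarsest making $T\mapsto\Tr[YT]$ continuous for $Y\in\C_B$, it follows that $\Phi$ is a homeomorphism onto its image $\Phi(\Xi)\subseteq P$. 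It therefore suffices to prove that $\Phi(\Xi)$ is closed in $P$, as a closed subset of a compact space is compact.

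Next I would identify $\Phi(\Xi)$ explicitly as the set of tuples $(T_X)_{X\in\T_A}\in P$ that depend linearly on the index, i.e.\ that satisfy $T_{\alpha X+\beta Y}=\alpha T_X+\beta T_Y$ for all $X,Y\in\T_A$ and all scalars $\alpha,\beta$. Indeed, any such tuple defines a linear map $\Lambda(X)\coloneqq T_X$ whose $1\to 1$ norm is at most $1$ because $T_X\in\mathcal{B}_X$, so $\Lambda\in\Xi$ and $\Phi(\Lambda)=(T_X)_X$; the reverse inclusion is immediate. Each individual linearity constraint cuts out a closed subset of $P$: the map $(T_Z)_Z\mapsto T_{\alpha X+\beta Y}-\alpha T_X-\beta T_Y$ is continuous from $P$ into $\T_B$ with its weak* topology, being a finite combination of coordinate projections with the vector-space operations, which are weak*-continuous; and since the weak* topology is Hausdorff (the compact, indeed finite-rank, operators separate trace-class operators), the singleton $\{0\}$ is closed, whence so is the preimage. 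Intersecting over all triples $(X,Y,\alpha,\beta)$ exhibits $\Phi(\Xi)$ as a closed set.

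The only genuinely delicate point is the interplay of topologies underlying the second and third steps: verifying that the weak*-operator topology on $\Xi$ coincides with the subspace topology pulled back along $\Phi$, and that the linearity relations survive passage to weak* limits inside $P$. Both reduce to the fact that $\T_B$ is a Hausdorff topological vector space in its weak* topology, so that the finitely many additions and scalar multiplications appearing in each constraint are continuous and the limiting tuple automatically inherits both linearity and the norm bound (the latter because each factor $\mathcal{B}_X$ is closed). Once $\Phi(\Xi)$ is shown to be closed in the compact product $P$, the compactness of $\Xi$ in the weak*-operator topology follows at once.
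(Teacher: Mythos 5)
Your proof is correct, but it takes a genuinely different route from the paper. You embed $\Xi$ into the product $\prod_{X\in\T_A}\mathcal{B}_X$ of weak*-compact balls and run the Tychonoff argument: the identification of the weak*-operator topology on $\Xi$ with the pullback of the product topology (via transitivity of initial topologies), the characterisation of the image as the linear tuples, and the closedness of each linearity constraint (using that the weak* topology on $\T_B$ is a Hausdorff vector topology) are all sound, so your image is a closed subset of a compact space and the conclusion follows. The paper instead identifies an explicit pre-dual of $\B\left(\T_A\to\T_B\right)$, namely the projective tensor product $\T_A\myhat{\otimes}_\pi \CC_B$ with the duality $\braket{\Lambda, X\otimes Y}=\Tr\left[Y\,\Lambda(X)\right]$, and then invokes the Banach--Alaoglu theorem once at the level of maps, noting that the resulting weak* topology is finer than the weak*-operator topology. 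What your approach buys is self-containedness: you only need Lemma~\ref{Banach-Alaoglu_trace_class} and Tychonoff's theorem, and you avoid the nontrivial tensor-product duality $\left(\T_A\myhat{\otimes}_\pi\CC_B\right)^*=\B\left(\T_A\to\T_B\right)$ cited from the Banach-space literature (in effect you re-prove Banach--Alaoglu in this setting, since its classical proof is exactly this product-space argument). What the paper's approach buys is structural information beyond compactness: the weak*-operator topology on bounded sets is the weak* topology of an honest dual pair, which is what justifies the paper's footnote that the two topologies actually coincide on $\Xi$, a fact not delivered by the Tychonoff route.
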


\begin{proof}
In order to apply the Banach--Alaoglu theorem we need to identify the weak*-operator topology with the weak* topology induced on $\B\left(\T_A\to \T_B \right)$ by a pre-dual. Construct the vector space
\begin{equation}
    \T_A\otimes \CC_B \coloneqq \left\{ \sum_{i=1}^N X_i\otimes Y_i:\, N\in \N,\, X_i\in \T_A,\, Y_i\in \CC_B\right\} .
\end{equation}
We can turn it into a normed space by defining the \emph{projective tensor norm} on it through the expression~\cite[p.~27]{DEFANT}
\begin{equation}
    \left\|Z\right\|_\pi \coloneqq \inf \left\{ \sum_{i=1}^N \|X_i\|_1 \|Y_i\|_\infty\, :\ Z = \sum_{i=1}^N X_i\otimes Y_i,\, N\in \N \right\} .
    \label{pi_norm}
\end{equation}
Let $\T_A\myhat{\otimes}_\pi \CC_B$ denote the completion of $\T_A\otimes \CC_B$ with respect to the norm $\|\cdot\|_\pi$. It is well known that~\cite[p.~27]{DEFANT}
\begin{equation}
    \left( \T_A\myhat{\otimes}_\pi \CC_B \right)^* = \B\left(\T_A\to \CC_B^* \right) = \B\left(\T_A \to \T_B \right) ,
\end{equation}
with the duality taking the form
\begin{equation}
    \braket{\Lambda, X\otimes Y} \coloneqq \Tr\left[ Y\,\Lambda(X)\right]
    \label{duality}
\end{equation}
for all $\Lambda \in \B\left(\T_A \to \T_B \right)$, $X\in \T_A$, and $Y\in \CC_B$, and extended by linearity and continuity to the whole $\T_A\myhat{\otimes}_\pi \CC_B$. 

Hence, $\T_A\myhat{\otimes}_\pi \CC_B$ is a pre-dual of $\B\left(\T_A \to \T_B \right)$. Then, the Banach--Alaoglu theorem~\cite[Theorem~IV.21]{REED} tells us that the dual unit ball $\Xi$ is compact in the weak* topology induced by $\T_A\myhat{\otimes}_\pi \CC_B$. According to this topology, a net $(\Lambda_\alpha)_\alpha$ in $\B\left(\T_A \to \T_B \right)$ converges to $\Lambda\in \B\left(\T_A \to \T_B \right)$ if and only if $\braket{\Lambda_\alpha, Z}\tends{}{\alpha} \braket{\Lambda,Z}$ for all $Z\in \T_A\myhat{\otimes}_\pi \CC_B$. By choosing $Z$ to be of the form $Z=X\otimes Y$, we see that the weak* topology we just defined is finer than the weak*-operator topology discussed above. Hence, since $\Xi$ is compact with respect to the former topology, it must be such with respect to the latter as well.\footnote{In fact, these two topologies can be shown to coincide on $\Xi$, by virtue of the general fact that a subset of a Banach space and its norm closure generate the same weak* topology on any bounded subset of the dual space.}
\end{proof}

\section{Bounding the entanglement cost of quantum channels}\label{sec:main_bounds}

\subsection{Quantum capacity and entanglement cost of a channel}

We are now interested in the study of quantum communication, in which the manipulated objects are quantum channels themselves. To this end, we specify the relevant sets of channels which can be regarded as having no entanglement, and hence, as basically useless for the purposes of transmitting quantum systems. These should be thought of as the equivalent of separable and PPT states at the level of maps. Recalling that $\K_{AB}$ denotes either one of the cones $\SEP_{AB}$ and $\PPT_{\!AB}$, let us define the set of \emph{\textbf{$\mathbfcal{K}$-enforcing channels}}, denoted $\KI$, as
\begin{equation}\begin{aligned}
    \KI_{A\to B} \coloneqq \left\{ \Gamma \in \mathrm{CPTP}_{A\to B} \,:\, \left[\idc \otimes \Gamma \right] (X) \in \K_{AB} \quad \forall\ X \in \T_+\big(\HH_A \otimes \HH_A \big) \right\}.
\end{aligned}\end{equation}
In particular, any normalised quantum state $\rho$ satisfies $[\idc_k \otimes \Gamma](\rho) \in \K^1_{AB}$ when $\Gamma \in \KI$, with $\K^1_{AB}$ standing for the set of operators in $\K_{AB}$ with trace one.
Without loss of generality, leveraging the fact that $\K$ is weak*-closed (in fact, it would suffice to have it trace norm closed), one can constrain the ancillary space that the identity channel is acting on to be finite-dimensional~\cite{Holevo2005}, in the sense that
\begin{equation}\begin{aligned}
    \KI_{A\to B} = \left\{ \Gamma \in \mathrm{CPTP}_{A\to B} \,:\, \left[\idc_k \otimes \Gamma \right] (X) \in \K_{RB} \quad \forall\ X \in \T_+\big(\HH_R \otimes \HH_A \big), \ \forall\ \HH_R \cong \mathds{C}^k,\  k\in \N \right\}.
\end{aligned}\end{equation}
When $\K = \pazocal{S}$, the $\K$-enforcing channels are known as \emph{entanglement breaking}~\cite{HSR}; when $\K = \PPT$, they correspond to \emph{PPT-binding} (or entanglement-binding) channels~\cite{horodecki_2000-1}. In finite-dimensional spaces, these are precisely the channels whose Choi-Jamiołkowski states $[\idc \otimes \Gamma](\Phi_{d_A})$ are separable or PPT, respectively~\cite{HSR, horodecki_2000-1}.

In the context of entanglement theory, the manipulation of quantum %channels
resources is typically realised using the class of local operations and classical communication (LOCC)~\cite{Bennett-error-correction, Berta2013}, which consists of all protocols where the two communicating parties can perform arbitrary channels on the local parts of their systems, and communicate classical information (e.g.\ measurement results) to each other. However, many fruitful bounds and relations have been obtained by relaxing the considered set of processes, allowing the two parties to employ larger classes of protocols~\cite{Rains2001, leung_2015, wang_2019-3, berta_2017, Kaur2017, Xin-exact-PPT, regula_2020-2}. 
In order to understand the ultimate capabilities of such channel manipulation schemes, and indeed also to avoid the ambiguity in choosing a `right' type of transformations to consider, we instead follow the axiomatic approach of~\cite{BrandaoPlenio1, BrandaoPlenio2, lami_2021-1} inspired by ideas that first emerged in the context of thermodynamics~\cite{Caratheodory1909, GILES, Lieb-Yngvason}, and set out to establish a bound that would apply to \emph{all} relevant processes, without making assumptions about their structure.

As the basic axiom of any communication scheme, we assume that a valid channel manipulation protocol should transform $\K$-enforcing channels without generating any additional resources. 
We consider this to be the weakest constraint that any physical communication protocol that could reasonably be deemed as `free', i.e.\ effectively inexpensive to implement, should satisfy.

To understand why, it is instructive to look at a setting that violates our assumption, such as that of the reverse Shannon theorem. In this framework, pre-shared entanglement is provided for free to the parties, and the only costly resource is instead classical communication~\cite{Bennett2014, Berta2011}. Clearly, by adding entanglement it is possible to transform an entanglement-breaking channel into something that is not entanglement-breaking, which contradicts our assumption. 
As mentioned in the Introduction, this setting is interesting because it leads to a reversibility result: the classical communication cost of implementing the channel is the same as the amount of classical communication that can be extracted from it~\cite{Bennett2002, Bennett2014, Berta2011}. And yet, it is not entirely clear in what concrete setting entanglement could be considered to be a cheaper resource than classical communication. The present state of affairs, in which we have serious difficulties establishing entanglement over distances larger than a thousand kilometers~\cite{entanglement-1200-km} but we routinely communicate classically with the Voyager~1 probe, more than 7 orders of magnitude more distant~\cite{Voyager-1}, casts some doubts on the practicality of this route. Our axioms, instead, do not incur this problem, as they prohibit the creation of entanglement for free altogether. They can thus be thought of as a possible extension of the LOCC paradigm that, although much more permissive than the standard LOCC framework, treats entanglement as a costly resource at all stages of the protocol.

We thus define, first at the level of transformations of single channels, the set of \textbf{\emph{$\mathrm{\boldsymbol{KE}}$-preserving quantum processes}}:
\begin{equation}\begin{aligned}
    \KIP_{[A \to B] \to [A' \to B']} \coloneqq \lset \Upsilon_{[A \to B] \to [A' \to B']} \bar \Upsilon(\mathrm{CPTP}_{A \to B}) \subseteq \mathrm{CPTP}_{A' \to B'},\; \Upsilon(\KI_{A \to B}) \subseteq \KI_{A' \to B'} \rset.
\end{aligned}\end{equation}
We do not assume any specific structure of such protocols; although physical transformations of channels are typically taken to have the form of so-called quantum superchannels~\cite{Chiribella2008}, here we do not need to presuppose that. We can also write $\KIP_{[A \to B]^{\otimes n} \to [A' \to B']^{\otimes m}}$ to denote processes which act on $n$ parallel copies of the space of maps from system $A$ to system $B$, in the sense that $\Lambda^{\otimes n}_{A \to B} \in [A \to B]^{\otimes n}$.

However, there are more general ways in which transformation protocols could access $n$ uses of a given channel~\cite{chiribella_2009,Chiribella-switch,oreshkov_2012}. We use the notation $[A \to B]^{\times n}$ to denote $n$-tuples of maps from $A$ to $B$, representing arbitrary uses of multiple channels. That is, a process which uses $n$ channels $(\Lambda_1, \ldots, \Lambda_n) \in [A \to B]^{\times n}$ does not have to use them in parallel, but can use them in any physically consistent manner, including transformations which do not have a fixed causal order. A more general form of a $\KI$-preserving quantum process can then be defined as
\begin{equation}\begin{aligned}
    \KIP_{[A \to B]^{\times n} \to [A' \to B']} \coloneqq \big\{ \Upsilon_{[A \to B]^{\times n} \to [A' \to B']} \,:\, &\Upsilon(\Lambda_1, \ldots, \Lambda_n) \in \mathrm{CPTP}_{A' \to B'} \; \forall\ \Lambda_i \in \mathrm{CPTP}_{A \to B},\\ &\Upsilon(\Gamma_1, \ldots, \Gamma_n) \in \KI_{A' \to B'} \; \forall\ \Gamma_i \in \KI_{A \to B} \big\}.
\end{aligned}\end{equation}
We note that each transformation $\Upsilon \in \KIP_{[A \to B]^{\times n} \to [A' \to B']}$ is assumed to be an $n$-linear map.

The quantum capacity $Q(\Lambda)$ is then the maximum rate $R$ at which $\K$-enforcing $n$-channel processes can simulate the noiseless communication channel $\idc_2^{\otimes \ceil{Rn}}$ when the channel $\Lambda$ is used $n$ times. The (parallel) entanglement cost $E_C(\Lambda)$, on the other hand, is given by the least rate at which noiseless identity channels $\idc_2$ are required in order to simulate the action of $n$ parallel copies of the given noisy channel $\Lambda$. Precisely,
\begin{align}
    Q_{\KIP}(\Lambda_{A \to B}) &\coloneqq \sup\left\{R>0:\, \lim_{n\to\infty} \, \inf_{\Upsilon_n \in \KIP\left([A \to B]^{\times n} \to [A_0 \to B_0]^{\otimes \ceil{Rn}}\right)} \left\| \Upsilon_n\!\left( \Lambda_{A \to B}^{\times n} \right) - \idc_2^{\otimes \ceil{Rn}} \right\|_\Diamond = 0 \right\} , \label{ch_distillable}\\[.5ex]
    E_{C,\,\KIP}(\Lambda_{A \to B}) &\coloneqq \inf\left\{R>0:\, \lim_{n\to\infty} \, \inf_{\Upsilon_n \in \KIP\left([A_0 \to B_0]^{\otimes \floor{Rn}} \to [A \to B]^{\otimes n}\right)} \left\| \Upsilon_n\!\left( \idc_2^{\otimes \floor{Rn}} \right) - \Lambda_{A \to B}^{\otimes n} \right\|_\Diamond = 0 \right\} . \label{ch_cost}
\end{align}
The distance used in the above \tcb{definitions is the operationally meaningful diamond norm, given by~\eqref{diamond}~\cite{kitaev_1997,Sacchi2005}.}
%is based on the diamond norm (completely bounded trace norm)~\cite{kitaev_1997}, and is given by
%\begin{equation}\begin{aligned}
%  \|\Lambda - \Gamma\|_\Diamond \coloneqq \sup_{\rho} \left\| \left[\idc_d \otimes \Lambda\right](\rho) - \left[\idc_d \otimes \Gamma\right](\rho) \right\|_1,
%  \label{diamond}
%\end{aligned}\end{equation}
%with the optimisation being over all bipartite quantum states $\rho \in \D(\HH_A \otimes \HH_{A})$ on two copies of the Hilbert space of Alice's system\footnote{The diamond norm is sometimes defined through an optimisation over only pure state $\psi \in \D(\HH_A \otimes \HH_A)$ or over all states $\rho \in \D(\HH_C \otimes \HH_A)$ with system $C$ arbitrary; all such notions are equivalent~\cite{gilchrist_2005}.}. This distance represents a natural extension of the trace distance to quantum channels, obeying an equivalent of the Holevo--Helstrom theorem: when the diamond norm distance between two channels vanishes, there is no physical process which can distinguish them better than a random guess~\cite{Sacchi2005}.

Let us remark about the different systems in play in~\eqref{ch_distillable}--\eqref{ch_cost}. In the definition of $Q_{\KIP}(\Lambda_{A \to B})$, we write $\Upsilon_n\!\left( \Lambda_{A \to B}^{\times n} \right)$ to denote that the copies of the channel $\Lambda$ do not have to be provided as a tensor product $\Lambda^{\otimes n}$, but can be used in any desired way. The target of this protocol is the channel $\idc_2^{\otimes \ceil{Rn}}$, representing $\ceil{Rn}$ qubits of noiseless quantum communication. In contrast, in our definition of $E_{C,\,\KIP}(\Lambda_{A \to B})$ we use a certain number of identity channels to simulate the action of $\Lambda^{\otimes n}$ in parallel. Importantly, this is \emph{not} the most general definition of channel entanglement cost, and indeed more general simulation schemes can be considered~\cite{Wilde2018}. However, the important point for us is that this definition is the \textit{lowest} possible entanglement cost of $\Lambda$ --- having to simulate $\Lambda^{\otimes n}$ is easier than having to simulate $n$ arbitrary uses of it, so our definition of $E_{C,\,\KIP}$ lower bounds more general ones~\cite{Wilde2018}. 

We also stress that the choice of the broad class of $\KI$-preserving quantum processes means that other choices of manipulation process --- and in particular LOCC --- are necessarily subsets of $\KIP$. This immediately gives $Q_{\KIP}(\Lambda) \geq Q_{\mathrm{LOCC}}(\Lambda)$ and $E_{C,\KI}(\Lambda) \leq E_{C,\mathrm{LOCC}}(\Lambda)$.

Finally, we can also consider an extension of the definitions in~\eqref{ch_distillable}--\eqref{ch_cost} which incorporates a non-zero transformation error --- that is, we no longer demand that the transformation be asymptotically exact, but only that the final error do not exceed a given threshold. The resulting modified notions of quantum capacity and channel entanglement cost are
\begin{align}
    Q^\epsilon_{\KIP}(\Lambda_{A \to B}) &\coloneqq \sup\left\{R\!>\!0:\, \limsup_{n\to\infty} \inf_{\Upsilon_n \in \KIP\left([A \to B]^{\times n} \to [A_0 \to B_0]^{\otimes \ceil{Rn}}\right)} \frac12 \left\| \Upsilon_n\!\left( \Lambda_{A \to B}^{\times n} \right) - \idc_2^{\otimes \ceil{\!Rn\!}} \right\|_\Diamond \!\leq \epsilon \right\}\! , \label{ch_distillable_eps}\\[.5ex]
    E^\epsilon_{C,\,\KIP}(\Lambda_{A \to B}) &\coloneqq \inf\left\{R\!>\!0:\, \limsup_{n\to\infty} \inf_{\Upsilon_n \in \KIP\left([A_0 \to B_0]^{\otimes \floor{Rn}} \to [A \to B]^{\otimes n}\right)} \frac12\left\| \Upsilon_n\!\left( \idc_2^{\otimes \floor{\!Rn\!}} \right) - \Lambda_{A \to B}^{\otimes n} \right\|_\Diamond \!\leq \epsilon \right\}\! . \label{ch_cost_eps}
\end{align}

\subsection{Robustness of entanglement of a channel}

Mirroring the quantification of resources such as entanglement for quantum states, one can ask about how to effectively measure the resource content of a channel. Although such concepts date back to the early days of quantum information~\cite{Bennett2003}, it was not until recently that resource measures of quantum channels were formalised~\cite{Berta2013,takeoka_2014,PLOB,liu_2020,liu_2019-1,gour_2019-1,Gour2019,bauml_2019}. One such measure can be naturally defined by extending the concept of robustness measures~\cite{VidalTarrach}, which we encountered in the definition of $R^s_\K$.
The \deff{(standard) $\mathbfcal{K}$-enforcing robustness} is
\begin{equation}\begin{aligned}
    R^s_{\KI} ( \Lambda ) = \inf \left\{ \lambda \,:\, \Lambda + \lambda \Gamma \in (1+\lambda) \KI,\; \Gamma \in \KI  \right\}.
\end{aligned}\end{equation}
Note that this quantity is in general \textit{not} just the robustness of entanglement of the corresponding Choi state~\cite{yuan_2021}. 

The crucial property of the robustness is its monotonicity under all $\KI$-preserving quantum processes, which we show explicitly for completeness.
\begin{lemma}\label{lem:rob_ch_monotonicity}
Let $\Lambda: A \to B$ be any channel. Then, for any $\Upsilon \in \KIP_{[A \to B]\to[A' \to B']}$ it holds that
\begin{equation}\begin{aligned}
 R^s_{\KI} ( \Lambda ) \geq R^s_{\KI} \left(\Upsilon(\Lambda)\right).
\end{aligned}\end{equation}
\end{lemma}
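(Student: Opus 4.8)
The plan is to show that a near-optimal decomposition witnessing $R^s_{\KI}(\Lambda)$ is mapped by $\Upsilon$ into a valid decomposition for $\Upsilon(\Lambda)$, so that the infimum defining $R^s_{\KI}(\Upsilon(\Lambda))$ cannot exceed $R^s_{\KI}(\Lambda)$. First I would fix $\epsilon>0$ and pick $\lambda \leq R^s_{\KI}(\Lambda)+\epsilon$ together with a channel $\Gamma \in \KI_{A\to B}$ such that $\Lambda + \lambda\Gamma \in (1+\lambda)\KI_{A\to B}$; write $\Theta \coloneqq \frac{1}{1+\lambda}(\Lambda+\lambda\Gamma)$, so that $\Theta \in \KI_{A\to B}$ and
\begin{equation}
\Lambda + \lambda \Gamma = (1+\lambda)\,\Theta .
\end{equation}

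The heart of the argument is exploiting linearity of $\Upsilon$ together with the two defining properties in the definition of $\KIP_{[A\to B]\to[A'\to B']}$, namely that $\Upsilon$ sends $\cptp_{A\to B}$ into $\cptp_{A'\to B'}$ and $\KI_{A\to B}$ into $\KI_{A'\to B'}$. Applying $\Upsilon$ to the displayed identity and using linearity gives
\begin{equation}
\Upsilon(\Lambda) + \lambda\,\Upsilon(\Gamma) = (1+\lambda)\,\Upsilon(\Theta).
\end{equation}
Since $\Gamma \in \KI_{A\to B}$ we have $\Upsilon(\Gamma) \in \KI_{A'\to B'}$, and since $\Theta \in \KI_{A\to B}$ we likewise have $\Upsilon(\Theta) \in \KI_{A'\to B'}$; in particular $\Upsilon(\Theta)$ is a valid $\K$-enforcing channel, so $(1+\lambda)\Upsilon(\Theta) \in (1+\lambda)\KI_{A'\to B'}$. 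Setting $\Gamma' \coloneqq \Upsilon(\Gamma) \in \KI_{A'\to B'}$, the identity above certifies that $\Upsilon(\Lambda) + \lambda\Gamma' \in (1+\lambda)\KI_{A'\to B'}$, which is exactly a feasible point for the program defining $R^s_{\KI}(\Upsilon(\Lambda))$ with the same value $\lambda$. Hence $R^s_{\KI}(\Upsilon(\Lambda)) \leq \lambda \leq R^s_{\KI}(\Lambda)+\epsilon$, and letting $\epsilon \to 0$ yields the claim.

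The only subtlety I would flag — and what I expect to be the main point requiring care rather than a genuine obstacle — is the justification that the linearity of $\Upsilon$ is being applied legitimately: $\Upsilon$ is defined a priori only on the set $\cptp_{A\to B}$, whereas $\Gamma$, $\Theta$, and the combination $\Lambda+\lambda\Gamma$ live in the larger space $\B(\T_A\to\T_B)$. Here one uses that $\Upsilon$ is understood as a linear map on the affine/linear span of channels (equivalently, the step uses only that $\Upsilon$ respects the convex combination $\Theta = \frac{1}{1+\lambda}\Lambda + \frac{\lambda}{1+\lambda}\Gamma$, a combination of two genuine channels), so no evaluation of $\Upsilon$ outside its natural domain is needed. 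Everything else is a one-line manipulation, and no topological input is required beyond what is already established.
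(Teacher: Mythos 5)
Your proposal is correct and follows essentially the same route as the paper's proof: take a (near-)optimal feasible decomposition $\Lambda+\lambda\Gamma=(1+\lambda)\Theta$, apply $\Upsilon$, and use the two defining properties of $\KIP$ to see that $\Upsilon(\Gamma),\Upsilon(\Theta)\in\KI$ yield a feasible decomposition of $\Upsilon(\Lambda)$ at the same value $\lambda$. Your remark on linearity --- that $\Upsilon$ need only respect the convex combination $\Theta=\tfrac{1}{1+\lambda}\Lambda+\tfrac{\lambda}{1+\lambda}\Gamma$ of two genuine channels --- is a valid point of care that the paper's terser argument leaves implicit.
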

\begin{proof}
Let $\Lambda + \lambda \Gamma \in (1+\lambda) \KI$, $\Gamma \in \KI$ be any feasible decomposition of $\Lambda$. Then $\Upsilon(\Gamma) \in \KI$ by definition of $\KIP$, and similarly $\Upsilon(\Lambda) + \lambda \Upsilon(\Gamma) \in (1+\lambda) \KI$. Optimising over all feasible decompositions gives the statement of the Lemma.
\end{proof}
The robustness $R^s_{\KI}$ is defined at the level of channels, rather than states, which prevents a direct application of methods established for the state case, such as those in Ref.~\cite{lami_2021-1}.
However, we will show this quantity to obey a very strong relation with the state-based robustness measure $R^s_\K$:  the channel robustness can be computed by optimising the state robustness $R^s_\K$ over all input states.

\begin{lemma}[(Channel--state equivalence of the robustness)]\label{lem:rob_psi_ineq}
For any channel $\Lambda: A \to B$, it holds that
\begin{equation}\begin{aligned}
    R^s_{\KI} ( \Lambda ) = R^s_{\K} ( \Lambda ) \coloneqq \sup_{\rho} R^s_\K\left(\left[\idc \otimes \Lambda\right] (\rho)\right),
\end{aligned}\end{equation}
where the maximisation is over all states $\rho \in \D(\HH_A \otimes \HH_A)$ (or, equivalently, over all pure states $\psi$, or over states $\rho \in \D(\HH_R \otimes \HH_A)$ with $R$ arbitrary).
\end{lemma}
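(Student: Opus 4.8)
The plan is to establish the two inequalities separately, the direction ``$\geq$'' being elementary and ``$\leq$'' carrying all the difficulty. For $R^s_{\KI}(\Lambda) \geq R^s_{\K}(\Lambda)$ I would argue as in the monotonicity proof of Lemma~\ref{lem:rob_ch_monotonicity}, but evaluating on a fixed input. Let $\Lambda + \lambda \Gamma \in (1+\lambda)\KI$ with $\Gamma \in \KI$ be any feasible channel decomposition, and fix $\rho \in \D(\HH_A \otimes \HH_A)$. Applying $\idc \otimes (\cdot)$ to $\rho$ gives $[\idc\otimes\Gamma](\rho) \in \K$ and, using that $\K$ is a cone,
\begin{equation*}
[\idc\otimes\Lambda](\rho) + \lambda\,[\idc\otimes\Gamma](\rho) = [\idc\otimes(\Lambda+\lambda\Gamma)](\rho) \in (1+\lambda)\K = \K .
\end{equation*}
Since $\Gamma$ is trace preserving, $\delta \coloneqq \lambda\,[\idc\otimes\Gamma](\rho)$ satisfies $\delta \in \K$, $\Tr\delta = \lambda$, and $[\idc\otimes\Lambda](\rho) + \delta \in \K$, so it is feasible for the state robustness of $[\idc\otimes\Lambda](\rho)$ and witnesses $R^s_\K([\idc\otimes\Lambda](\rho)) \leq \lambda$. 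Taking the supremum over $\rho$ and then the infimum over feasible $\lambda$ yields $R^s_\K(\Lambda) \leq R^s_{\KI}(\Lambda)$.

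The reverse inequality is the crux: the perturbations $\delta$ that witness $R^s_\K([\idc\otimes\Lambda](\rho)) \leq \lambda^* \coloneqq R^s_{\K}(\Lambda)$ depend on $\rho$, and must be glued into one channel $\Gamma \in \KI$ valid for every input simultaneously. I would first settle the \emph{finite-dimensional} case, where Choi-matrix representations turn everything into a semidefinite program: a map is $\K$-enforcing iff its Choi operator lies in $\K$, so $R^s_{\KI}(\Lambda)$ becomes $\min\{\lambda : J_\Gamma \in \K,\ \Tr_B J_\Gamma = \id,\ J_\Lambda + \lambda J_\Gamma \in \K\}$, the marginal constraint $\Tr_B J_\Gamma = \id$ being exactly the trace-preservation condition that distinguishes this quantity from the robustness of the Choi state. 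Dualising this program and using the dual form $1 + 2R^s_\K(\sigma) = \sup\{\Tr X\sigma : X \in [-\id,\id]_{\K^*}\}$ from~\eqref{std_rob}, together with $\Tr(X[\idc\otimes\Lambda](\rho)) = \Tr([\idc\otimes\Lambda]^\dagger(X)\rho)$, I would identify both $R^s_{\KI}(\Lambda)$ and $\sup_\rho R^s_\K([\idc\otimes\Lambda](\rho))$ with a single saddle-point value; the Lagrange multiplier dual to $\Tr_B J_\Gamma = \id$ is precisely what converts the state-level maximisation over $\rho$ into the channel-level largest-eigenvalue object. The equality of the two optimal values is then an exchange of the infimum over the noise channel $\Gamma$ with the supremum over inputs $\rho$, which I would justify by Sion's minimax theorem, using the convexity of the (compact) set of channels and the bilinearity of the pairing.

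The main obstacle is the passage to infinite dimensions, and this is where the topology of Section~\ref{sec:topology} becomes indispensable. My plan is to approximate $\Lambda$ by finite-dimensional channel approximations $\Lambda_n \to \Lambda$ in the weak*-operator topology, apply the finite-dimensional identity to bound $R^s_{\KI}(\Lambda_n) \leq \lambda^*$ (the truncations being chosen so as not to increase the worst-case output robustness), and then transfer the bound to $\Lambda$ through \emph{lower semicontinuity} of the map $\Lambda \mapsto R^s_{\KI}(\Lambda)$: lower semicontinuity gives $R^s_{\KI}(\Lambda) \leq \liminf_n R^s_{\KI}(\Lambda_n) \leq \lambda^*$, which combined with the elementary direction closes the proof. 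Establishing this lower semicontinuity is the real technical heart, and it is here that Lemma~\ref{Xi_weak*op_compact_lemma} enters: because the admissible noise channels lie in the weak*-operator compact ball $\Xi$, the infimum defining $R^s_{\KI}$ is attained on a compact feasible set, and the value function behaves lower semicontinuously under limits of its argument. The delicate point to watch is that trace preservation is \emph{not} stable under weak*-operator limits --- mass can escape to infinity, leaving only a trace non-increasing limit --- so one must work with lower semicontinuity rather than continuity and verify that the cone membership constraints, being tested against fixed compact operators and positive trace-class inputs, do survive the limit, with any loss of normalisation reabsorbed without increasing the robustness.
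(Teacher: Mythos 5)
Your proposal is correct and follows essentially the same route as the paper's own proof: the elementary direction by pushing a feasible channel decomposition through $\idc \otimes (\cdot)$, the finite-dimensional identity via Choi operators and Sion's minimax theorem (with the marginal constraint's multiplier playing the role of the input state), and the infinite-dimensional extension via finite-rank truncations combined with weak*-operator lower semicontinuity of $R^s_{\KI}$, including the crucial relaxation to trace non-increasing $\K$-enforcing subchannels to compensate for the failure of trace preservation under weak*-operator limits. There is no substantive deviation from the paper's argument, only sketch-level detail left to be filled in (e.g.\ the gauge-function argument behind the lower semicontinuity and the $\K$-preserving-channel argument showing the truncations do not increase the worst-case output robustness).
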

The proof of this Lemma is one of the main technical contributions of this work. Due to its length, we defer it to Sec.~\ref{sec:robustness_ch_proof}.

Our main idea will be therefore to go from quantities defined at the level of channels to quantities defined at the level of states, which will allow us to extend the reasoning of Ref.~\cite{lami_2021-1} to channel manipulation. In particular, we define the \emph{\textbf{channel tempered robustness}} and \emph{\textbf{channel tempered negativity}} as
\begin{align}
    %1+2\tsr_{\!\K}(\Lambda | \Theta) &\coloneqq \sup_{\rho} \tsr_{\!\K}([\idc \otimes \Lambda](\rho)\, |\, [\idc \otimes \Theta](\rho))\\
    \tsr_{\!\K}(\Lambda | \Theta) &\coloneqq \sup_{\rho} \tsr_{\!\K}\left([\idc \otimes \Lambda](\rho)\, |\, [\idc \otimes \Theta](\rho)\right)\\
\tsr_{\!\K} (\Lambda ) &\coloneqq \tsr_{\!\K}(\Lambda | \Lambda)\\
    \tn(\Lambda | \Theta) &\coloneqq \sup_{\rho} \br{\tn} \left([\idc \otimes \Lambda](\rho)\,|\, [\idc \otimes \Theta](\rho)\right)\\
\tn (\Lambda ) &\coloneqq \br{\tn}(\Lambda | \Lambda),\\
\ten(\Lambda) &\coloneqq \log_2 \tn(\Lambda).
\end{align}

The final auxiliary result that we will need gives the precise value of the channel-based robustness $ R^s_{\KI}$ for the identity channel.
\begin{lemma}\label{lem:idrob}
It holds that $R^s_{\KI}(\idc_d) = d-1$.
\end{lemma}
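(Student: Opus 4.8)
The plan is to reduce the channel-level quantity to a state-level one using the channel--state equivalence of the robustness (Lemma~\ref{lem:rob_psi_ineq}) and then to evaluate the resulting state optimisation, with the maximally entangled state as the extremiser. Since $\idc_d$ is the identity channel on a system $\HH_A\cong\mathds{C}^d$, we have $[\idc\otimes\idc_d](\rho)=\rho$ for every input $\rho$, so Lemma~\ref{lem:rob_psi_ineq} gives
\[
R^s_{\KI}(\idc_d)=\sup_{\rho\in\D(\HH_A\otimes\HH_A)} R^s_\K(\rho),
\]
and the task becomes computing the maximal $\K$-robustness of a $d\times d$ state for both $\K=\SEP$ and $\K=\PPT$. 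I would then establish the two matching estimates $\sup_\rho R^s_\K(\rho)\le d-1$ and $\sup_\rho R^s_\K(\rho)\ge d-1$ separately.

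For the upper bound I would first observe that, by the inclusion $\SEP_{AB}\subseteq\PPT_{AB}$ in~\eqref{PPT} and the consequent inequality $R^s_\PPT\le R^s_\SEP$, it suffices to bound the separable robustness. As $R^s_\SEP$ is a convex function of the state and $\D(\HH_A\otimes\HH_A)$ is the closed convex hull of its pure states, the supremum is attained on a pure state $\ketbra{\psi}$. For such a state with Schmidt decomposition $\ket{\psi}=\sum_i\sqrt{p_i}\,\ket{e_if_i}$ the separable robustness is known explicitly, $R^s_\SEP(\ketbra{\psi})=\big(\sum_i\sqrt{p_i}\big)^2-1$~\cite{VidalTarrach}; a Cauchy--Schwarz estimate $\sum_{i=1}^d\sqrt{p_i}\le\sqrt d$, saturated precisely when all $p_i=1/d$, then gives $R^s_\SEP(\ketbra{\psi})\le d-1$. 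Hence $\sup_\rho R^s_\K(\rho)\le d-1$ for both cones.

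For the lower bound I would exhibit a single explicit dual-feasible point that serves both cones at once. Writing $\ket{\Phi_d}=\tfrac{1}{\sqrt d}\sum_{i=1}^d\ket{ii}$ and $\Phi_d=\ketbra{\Phi_d}$, I set $X\coloneqq 2d\,\Phi_d-\id$. Using $\Phi_d^\Gamma=\tfrac1d F$, where $F$ denotes the swap operator, one checks $\id+X=2d\,\Phi_d\ge0$ and $(\id-X)^\Gamma=2(\id-F)\ge0$, so $\id+X\in\PSD$ and $\id-X\in\PSD^\Gamma$, both of which lie in $\PPT^*$; by~\eqref{operator_interval} the membership of $\id\pm X$ in $\PPT^*$ is exactly the statement $X\in[-\id,\id]_{\PPT^*}$. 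Since $\SEP\subseteq\PPT$ forces the reverse inclusion $[-\id,\id]_{\PPT^*}\subseteq[-\id,\id]_{\SEP^*}$ of the operator intervals, the same $X$ is admissible for both cones. As $\Tr X\Phi_d=2d-1$, the dual expression~\eqref{std_rob} yields $R^s_\K(\Phi_d)\ge\tfrac12\big((2d-1)-1\big)=d-1$ for $\K\in\{\SEP,\PPT\}$. Combining the two bounds gives $\sup_\rho R^s_\K(\rho)=d-1$ and therefore $R^s_{\KI}(\idc_d)=d-1$.

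I expect the main obstacle to be the upper bound: proving $\sup_\rho R^s_\SEP(\rho)\le d-1$ cleanly hinges on the pure-state robustness formula $\big(\sum_i\sqrt{p_i}\big)^2-1$, which is the one genuinely nontrivial input (equivalently, one must show that no Hermitian operator whose numerical range on product states lies in $[-1,1]$ can have largest eigenvalue exceeding $2d-1$). The lower bound is immediate once the witness $2d\,\Phi_d-\id$ is guessed and verified, and the passage from the channel to the state optimisation is a direct invocation of Lemma~\ref{lem:rob_psi_ineq}.
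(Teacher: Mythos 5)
Your proof is correct, but it takes a genuinely different route from the paper's, most notably for the upper bound. The paper never evaluates the state-level supremum: for the upper bound it works entirely at the channel level, recalling the separable-state decomposition $\Phi_d = d\sigma_+ - (d-1)\sigma_-$ of~\cite{VidalTarrach, Horodecki1999}, observing that $\sigma_\pm$ are Choi states of entanglement-breaking channels $\Gamma_\pm$, and reading off the primal-feasible decomposition $\idc_d = d\Gamma_+ - (d-1)\Gamma_-$, which gives $R^s_{\KI}(\idc_d)\le d-1$ without invoking Lemma~\ref{lem:rob_psi_ineq} at all; for the lower bound it cites the known value $R^s_\K(\Phi_d)=d-1$ and uses only the easy direction of Lemma~\ref{lem:rob_psi_ineq} (that the channel robustness dominates the state robustness of any output, inequality~(ii) in its proof). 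You instead rely on the \emph{full} equality of Lemma~\ref{lem:rob_psi_ineq} --- in particular its hard direction, whose proof requires Sion's minimax theorem and, in infinite dimension, the lower-semicontinuity machinery --- to reduce everything to $\sup_\rho R^s_\K(\rho)$ over states on $\mathds{C}^d\otimes\mathds{C}^d$, and then compute that supremum: convexity plus the Vidal--Tarrach pure-state formula $\bigl(\sum_i\sqrt{p_i}\bigr)^2-1$ and Cauchy--Schwarz for the upper bound, and the explicit dual witness $X=2d\,\Phi_d-\id$ (correctly verified feasible for both cones via $\id+X\ge 0$ and $(\id-X)^\Gamma\ge 0$, together with the inclusion $[-\id,\id]_{\PPT^*}\subseteq[-\id,\id]_{\SEP^*}$) for the lower bound. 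The trade-off: the paper's argument has a lighter logical footprint, since both of its bounds are elementary and it leans only on the easy half of the channel--state equivalence, whereas your argument buys a self-contained dual-witness certification of $R^s_\K(\Phi_d)\ge d-1$ (rather than citing that exact value) and the slightly more informative byproduct that the maximal $\K$-robustness over \emph{all} $d\times d$ states equals $d-1$ --- at the price of making the lemma depend on the heaviest technical result of the paper.
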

\begin{proof}
Recall from~\cite{VidalTarrach} that a maximally entangled state can be written as $\Phi_d = d \sigma_+ - (d-1) \sigma_-$, where
\begin{equation}\begin{aligned}
    \sigma_+ \coloneqq \frac{\id + d\Phi_d}{d(d+1)}\, ,\qquad \sigma_- \coloneqq \frac{\id-\Phi_d}{d^2-1}
\end{aligned}\end{equation}
are both separable states~\cite{Horodecki1999}. It is not difficult to notice that $\sigma_\pm$ correspond to the Choi states of valid quantum channels, i.e.\ $\sigma_\pm = [\idc\otimes\Gamma_\pm](\Phi_d)$ for some $\Gamma_\pm$, which means that such channels are entanglement breaking~\cite{HSR}. This gives a valid feasible decomposition for $\idc_d$ as $\idc_d = d \Gamma_+ - (d-1) \Gamma_-$, implying that $R^s_\KI(\idc_d) \leq d-1$. On the other hand, it is known that the state-based robustness $R^s_\K$ satisfies $R^s_{\K}(\Phi_d) = d-1$~\cite{VidalTarrach}, which by Lemma~\ref{lem:rob_psi_ineq} gives
\begin{equation}\begin{aligned}
    R^s_{\KI}(\idc_d) \geq R^s_{\K}([\idc_d\otimes\idc_d](\Phi_d)) = d-1
\end{aligned}\end{equation}
for both the separable and the PPT cone, so equality must hold.
\end{proof}

\begin{rem}
The fact that the robustness $R^s_\KI(\Lambda)$ of a channel equals the state-based robustness $R^s_\K(J_{\Lambda})$ of the corresponding Choi-Jamiołkowski state is a more general property satisfied by so-called teleportation-simulable channels~\cite{Bennett-error-correction,horodecki_1999-1,muller_thesis,PLOB,Kaur2017}. Here we limited ourselves to the channel $\idc_d$ for simplicity.
\end{rem}

\subsection{General bounds on \texorpdfstring{$\boldsymbol{E_C}$}{E\_C}}

With the definitions in place, we are ready to state and prove the main technical result of this paper.

\begin{thm} \label{main_tool_thm_ch}
With $\KI$ denoting either entanglement-breaking or PPT-binding channels, the entanglement cost under $\KI$-preserving quantum processes satisfies that
\bb\label{eq:main_first}
\inf_{\epsilon\, \in\, [0,\,1/2)} E_{C,\,\KIP}^\epsilon (\br{\Lambda}) \geq \tl_{\!\K}(\Lambda)\, ,
\ee
where
\bb\label{eq:main_second}
\tl_{\!\K}(\br{\Lambda}) \coloneqq \limsup_{n\to\infty} \frac1n \log_2 \left( 1+\tsr_{\!\K}\left(\Lambda^{\otimes n}\right) \right) \geq \ten(\Lambda) \br{{}\geq \ten(J_\Lambda)}.
\ee
\end{thm}

\br{
\begin{rem}
The evaluation of the left-hand side of \eqref{eq:main_first} is, in general, very difficult due to two obstacles: one is the optimisation in the transformation error $\epsilon$, and the other is the computation of the limit $n\to\infty$ of the number of channel copies $n$ which is used to define $E_{C,\,\KIP}^\epsilon$ (Eq.~\eqref{ch_cost_eps}). Our first bound in~\eqref{eq:main_first} alleviates the former problem, giving in particular a general lower bound on the entanglement cost $E_{C,\,\KIP}$. The resulting bound, however, still requires an asymptotic regularisation. The crucial aspect of our second bound in~\eqref{eq:main_second} is that it is single letter --- no optimisation over many channel copies is needed, and the quantity corresponds to an optimisation problem of a fixed size. For any input state $\rho$, the computation of $E^\tau_N([\idc \otimes \Lambda](\rho))$ is a semidefinite program~\cite{lami_2021-1}, making the bound efficiently computable in practice.
\end{rem}
}

\begin{proof}
The argument follows closely that given in the proof of Theorem~S7 in Ref.~\cite{lami_2021-1}. Let $R$ be an achievable rate for the entanglement cost $E_{C,\, \KIP}^\epsilon (\Lambda)$ at some error threshold $\epsilon\in [0,1/2)$, as per the definition in~\eqref{ch_cost_eps}. Consider a sequence of operations $\Upsilon_n\in \KIP\left([A_0 \to B_0]^{\otimes \floor{Rn}} \to [A \to B]^{\otimes n}\right)$, with $A_0,B_0$ single-qubit systems, such that
\bb
\epsilon_n\coloneqq \frac12\left\| \Upsilon_n\!\left( \idc_2^{\otimes \floor{Rn}} \right) - \Lambda_{A \to B}^{\otimes n} \right\|_\Diamond
\label{main_tool_proof_eq1_ch}
\ee
with
\bb
\limsup_{n\to\infty} \epsilon_n\leq \epsilon < \frac12\, .
\label{main_tool_proof_eq2_ch}
\ee
For all sufficiently large $n$, we then write
\bb
2^{\floor{Rn}} &\texteq{(i)} 1 + R_\KI^s\left( \idc_2^{\otimes \floor{Rn}}\right) \\
&\textgeq{(ii)} 1 + R_\KI^s\left( \Upsilon_n\left(\br{\idc}_2^{\otimes \floor{Rn}}\right)\right) \\
&\texteq{(iii)} 1 + R_\K^s\left( \Upsilon_n\left(\br{\idc}_2^{\otimes \floor{Rn}}\right)\right) \\
&\textgeq{(iv)} (1-2\epsilon_n) \left( 1+\tsr_{\!\K}\left(\Lambda^{\otimes n}\right) \right) + \epsilon_n \\
&\geq (1-2\epsilon_n) \left( 1+\tsr_{\!\K}\left(\Lambda^{\otimes n}\right) \right) \\
&\textgeq{(v)} (1-2\epsilon_n) \frac{\tn\left(\Lambda^{\otimes n}\right) + 1}{2} \\
&\textgeq{(vi)} (1-2\epsilon_n) \frac{\tn\left(\Lambda\right)^n + 1}{2} \\
&\geq \frac{1-2\epsilon_n}{2}\, \tn(\Lambda)^n\, .
\label{chain_ch}
\ee
Let us go through each of the steps in detail.
\begin{enumerate}[(i)]
    \item Follows by Lemma~\ref{lem:idrob}, where we established the exact value of $R^s_\KI$ for the identity channel.
    \item Follows from the monotonicity of $R_\KI^s$ under all $\KI$-preserving quantum processes (Lemma~\ref{lem:rob_ch_monotonicity}).
    \item Here we go from the channel-based quantity $R^s_\KI$ to the state-based quantity $R^s_\K$ through an application of the channel--state equivalence (Lemma~\ref{lem:rob_psi_ineq}).
    \item This is the channel analogue of Lemma~S6 of \cite{lami_2021-1}, which can be seen as follows. Suppose that two channels satisfy $\frac12 \|\Lambda-\Lambda'\|_{\Diamond} \leq \epsilon$. Then
\begin{equation}\begin{aligned}
& 1+2\tsr_{\!\K}(\Lambda' | \Lambda) \\
&\quad = \sup_{\rho}\left\{ \Tr\! \left(X\, [\idc\otimes\Lambda'](\rho)\right)\,:\, X\in \left[ -\id, \id\right]_{\K^*},\ \|X\|_\infty=\Tr\! \left(X\, [\idc\otimes\Lambda](\rho)\right) \right\} \\
&\quad \geq \sup_\rho\Big\{ \left(1 - \left\| [\idc\otimes\Lambda'](\rho) - [\idc\otimes\Lambda](\rho)\right\|_{1}\right) \\
&\hphantom{\quad \geq \sup_\rho\Big\{}\times \Tr\! \left(X\, [\idc\otimes\Lambda](\rho)\right) :\ X\in \left[ -\id, \id\right]_{\K^*},\ \|X\|_\infty=\Tr\! \left(X\, [\idc\otimes\Lambda](\rho)\right) \Big\} \\
&\quad \geq \inf_\rho \left\{1 - \left\| [\idc\otimes\Lambda'](\rho) - [\idc\otimes\Lambda](\rho)\right\|_{1} \right\} \\ &\qquad \times \sup_\rho\left\{ \Tr\! \left(X\, [\idc\otimes\Lambda](\rho)\right) :\ X\in \left[ -\id, \id\right]_{\K^*},\ \|X\|_\infty=\Tr\! \left(X\, [\idc\otimes\Lambda](\rho)\right) \right\} \\
&\quad = \left( 1 - \left\|\Lambda' - \Lambda\right\|_{\Diamond} \right) \left( 1+ 2\tsr_{\!\K}(\Lambda)\right) \\
&\quad \geq (1-2\epsilon) \left( 1+ 2\tsr_{\!\K}(\Lambda)\right).
\end{aligned}\end{equation}
\item Follows by Proposition~S5(d) of \cite{lami_2021-1}, which tells us that $\tsr_{\! \SEP}(\rho) \geq \tsr_{\! \PPT}(\rho) \geq \frac12 (\tn(\rho)-1)$ for any state $\rho$.
    \item This step is a consequence of the super-multiplicativity of the channel tempered negativity; explicitly,
\bb
\tn(\Lambda^{\otimes n}) &= \sup_{\rho \in \D\left(\HH_A^{\otimes n} \otimes \HH_A^{\otimes n}\right)} \tn\!\left([\idc^{\otimes n} \otimes \Lambda^{\otimes n}](\rho)\right)\\
    &\geq \sup_{\rho \in \D(\HH_A \otimes \HH_A)} \tn\!\left([\idc^{\otimes n} \otimes \Lambda^{\otimes n}](\rho^{\otimes n})\right)\\
    &= \sup_{\rho \in \D(\HH_A \otimes \HH_A)} \tn\!\left( \left[[\idc \otimes \Lambda](\rho)\right]^{\otimes n} \right)\\
    &\geq \sup_{\rho \in \D(\HH_A \otimes \HH_A)} \tn\!\left([\idc \otimes \Lambda](\rho)\right)^n\\
    &= \tn(\Lambda)^n,
\ee
where in the second inequality (on the fourth line) we used the supermultiplicativity of $N_\tau$ for states, i.e.\ the fact that $N_\tau(\rho^{\otimes n}) \geq N_\tau(\rho)^n$, which was shown in~\cite[Proposition~S5(e)]{lami_2021-1}.
\end{enumerate}
Let us now go back to~\eqref{chain_ch}. Applying the logarithm, dividing by $n$, and taking the limit superior as $n \to \infty$ concludes the proof. The stated inequality with the quantity $\tl_{\!\K}(\Lambda)$ follows by applying this procedure to the inequality in step~(iv).
\end{proof}

Let us stress again that the $\KI$-preserving processes considered here are a larger class than typically employed ones, such as LOCC or PPT processes. Since the entanglement cost can only increase when a smaller type of channel manipulation schemes is used, the bound of Theorem~\ref{main_tool_thm_ch} applies also to any smaller class, and in particular for any channel $\Lambda$ it holds that
\begin{equation}\begin{aligned}
  E_{C,\mathrm{LOCC}}(\Lambda) \geq \ten(\Lambda).
\end{aligned}\end{equation}
We will shortly see that this bound can outperform previously known ones.

\tcb{
\begin{rem}
    We observe in passing that the same reasoning used to derive the bounds appearing in Theorem~\ref{main_tool_thm_ch} in terms of tempered quantities, which ultimately relies on the properties of the partial transpose, can be repeated for another operation known as \emph{reshuffling} (or realignment)~\cite{Resh1,Resh2,Resh3}. The outcome is another family of lower bounds for the channel entanglement cost, possibly independent of the one provided here. Indeed, the underlying reasoning can be extended also beyond the resource theory of entanglement. A complete account of these developments will be published soon~\cite{insights}.    
\end{rem}
}

\section{Irreversibility of channel manipulation: a detailed proof}\label{sec:irrev}

In~\cite{lami_2021-1} we presented a result establishing the fundamental irreversibility under non-entangling operations of the theory of entanglement manipulation for states, as well as its extension to the channel setting~\cite[Methods]{lami_2021-1}. The purpose of this section is to provide a complete proof of this result, leveraging the technical tools honed in the previous section. We start by recalling the definition of the two-qutrit state $\omega_3$, whose irreversibility under general non-entangling protocols was shown in~\cite{lami_2021-1}; it is defined by
\bb
\omega_3 \coloneqq \frac12 \left( P_3 - \Phi_3 \right) .
\label{omega_3}
\ee
Here, $P_3\coloneqq \sum_{j=1}^3 \ketbra{jj}$ is the projector onto the maximally correlated subspace, and $\Phi_3 =\ketbra{\Phi_3}$, with $\ket{\Phi_3}\coloneqq \frac{1}{\sqrt3}\sum_i \ket{ii}$, is the maximally entangled state of dimension 3. We then considered the qutrit-to-qutrit channel $\Omega_3$ whose Choi state is $\omega_3$. This is given by~\cite[Methods]{lami_2021-1}
\begin{equation}\begin{aligned}\label{omega_3_ch}
    \Omega_3 \coloneqq \frac{3}{2} \Delta - \frac{1}{2} \idc_3,
\end{aligned}\end{equation}
where $\Delta(\cdot) = \sum_{i,j=1}^3 \ketbra{i} \cdot \ketbra{i}$ is the dephasing channel, setting to $0$ all off-diagonal elements of the input matrix. %The channel $\Omega_3$ is used in~\cite{lami_2021-1} to provide an example where the entanglement cost under KE-preserving operations exceeds the corresponding quantum capacity.
The fact that the entanglement cost of the channel $\Omega_3$ exceeds its corresponding quantum capacity, even under all KE-preserving transformations, has been announced in~\cite{lami_2021-1}, as we now recall.
%We use this ansatz to establish a channel equivalent of Theorem~1 in~\cite{lami_2021-1}, showing the irreversibility of quantum communication in the general setting considered in our work.

\begin{thm}[{\cite{lami_2021-1}}] \label{irreversibility_thm_ch}
The qutrit-to-qutrit channel $\Omega_3$ defined by~\eqref{omega_3_ch} satisfies that
\bb
Q_{\KIP}(\Omega_3) \leq \log_2 \frac32 \approx 0.585
\label{omega_3_distillable_ch}
\ee
but
\bb
E_{C,\, \KIP}^\epsilon (\Omega_3) \geq 1
\label{omega_3_cost_ch}
\ee
for all $\epsilon\in [0,\, 1/2)$. In particular, the resource theory of communication is irreversible under quantum processes which preserve either the set of entanglement-breaking channels or that of PPT-binding channels.
\end{thm}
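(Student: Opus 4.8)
The plan is to prove the two inequalities in Theorem~\ref{irreversibility_thm_ch} separately, since they rely on genuinely different tools. The lower bound~\eqref{omega_3_cost_ch} on the entanglement cost follows almost immediately from the machinery already assembled: by Theorem~\ref{main_tool_thm_ch} we have $E_{C,\,\KIP}^\epsilon(\Omega_3) \geq \ten(\Omega_3) = \log_2 \tn(\Omega_3)$ for every $\epsilon \in [0,1/2)$, so it suffices to show $\tn(\Omega_3) \geq 2$. By definition $\tn(\Omega_3) = \sup_\rho \tn\!\left([\idc\otimes\Omega_3](\rho)\right)$, and a natural choice is to feed in the maximally entangled input $\rho = \Phi_3$, which by construction produces the Choi state $[\idc\otimes\Omega_3](\Phi_3) = \omega_3$. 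Thus the task reduces to the state-level computation $\tn(\omega_3) \geq 2$. The plan is to exhibit an explicit feasible dual operator $X = X^\dagger$ with $\|X^\Gamma\|_\infty \leq 1$ and $\|X\|_\infty = \Tr X\omega_3$ achieving $\Tr X\omega_3 \geq 2$; since the state $\omega_3$ is essentially the same witness that was analysed in~\cite{lami_2021-1}, I expect this value to be inherited directly from the state-level computation already carried out there.

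For the upper bound~\eqref{omega_3_distillable_ch} on the quantum capacity, the plan is to use a continuity/monotonicity argument against a suitable additive bounding function. First I would identify a quantity that (a) upper-bounds the quantum capacity $Q_{\KIP}$, (b) is subadditive or additive so that it factorises over the $n$ uses of the channel, and (c) behaves correctly under the diamond-norm limit defining~\eqref{ch_distillable}. Since distillation of $\ceil{Rn}$ qubits via a $\KIP$ process must asymptotically produce something close to $\idc_2^{\otimes \ceil{Rn}}$, whose robustness $R^s_\KI(\idc_2^{\otimes \ceil{Rn}}) = 2^{\ceil{Rn}}-1$ is fixed by Lemma~\ref{lem:idrob}, one wants a monotone $M$ with $M(\idc_2) = 1$ (so that $M(\idc_2^{\otimes m}) = m$) and $M(\Omega_3) \leq \log_2\tfrac32$. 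The most economical route is to observe that $\Omega_3 = \tfrac32\Delta - \tfrac12\idc_3$ and that $\Delta$ is a measure-and-prepare (hence entanglement-breaking, i.e.\ $\KI$) channel; the dephasing structure constrains how much entanglement can be generated, and I would bound the capacity by the logarithm of the smallest $t$ such that $\Omega_3 \in t\,\KI - (t-1)\,\KI$ in an appropriate sense, giving $\log_2\tfrac32$.

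The hard part will be locating and justifying the precise capacity bound, because the quantum capacity $Q_{\KIP}$ in~\eqref{ch_distillable} optimises over all $n$-channel processes with no assumed causal structure, so one cannot simply invoke a fixed-input single-letter monotone. The key technical step is to show that the bounding functional is genuinely monotone under \emph{all} $\KIP$ processes (not just parallel ones) and that it controls the diamond-norm error, so that an asymptotically faithful simulation of $\idc_2^{\otimes \ceil{Rn}}$ forces $\ceil{Rn} \lesssim n\log_2\tfrac32$. I would structure this by first proving the single-shot estimate on $\Omega_3$, then lifting it across $n$ uses via an additivity or amortisation argument analogous to the one for the cost direction, and finally using a continuity bound (of the type in step~(iv) of Theorem~\ref{main_tool_thm_ch}, adapted to the capacity side) to absorb the vanishing error. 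Combining the two bounds, the strict gap $\log_2\tfrac32 < 1$ yields $Q_{\KIP}(\Omega_3) < E_{C,\,\KIP}^\epsilon(\Omega_3)$, which is exactly the claimed irreversibility of the resource theory of communication under both entanglement-breaking--preserving and PPT-binding--preserving processes.
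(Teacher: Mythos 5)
Your treatment of the cost bound~\eqref{omega_3_cost_ch} is essentially identical to the paper's: apply Theorem~\ref{main_tool_thm_ch}, choose the maximally entangled input so that $\ten(\Omega_3) \geq \ten\!\left([\idc\otimes\Omega_3](\Phi_3)\right) = \ten(\omega_3)$, and inherit $\ten(\omega_3)=1$ from the state-level computation of~\cite{lami_2021-1}. That half is fine.

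The capacity bound~\eqref{omega_3_distillable_ch} is where there is a genuine gap, in two respects. First, the quantity you propose --- the smallest $t$ such that $\Omega_3 \in t\,\KI - (t-1)\,\KI$ --- is the logarithm of the \emph{standard} robustness of the channel, in which \emph{both} components of the decomposition must be $\K$-enforcing. But the decomposition you invoke, $\Omega_3 = \tfrac32\Delta - \tfrac12\idc_3$, is not feasible for that quantity: $\idc_3$ is certainly not entanglement-breaking, nor PPT-binding. The quantity that actually works is the max-relative entropy channel divergence $\inf_{\Gamma\in\KI} D_{\max}(\Omega_3\|\Gamma)$ (equivalently, the \emph{generalised} robustness), in which the subtracted term only needs to be CPTP; with $\Gamma=\Delta$, $\Xi=\idc_3$ and $\lambda=\tfrac12$ this yields $\log_2\tfrac32$. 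As stated, your single-shot estimate is unjustified and may well exceed $\log_2\tfrac32$.

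Second, and more fundamentally, you correctly identify that the crux is showing that the chosen functional upper-bounds $Q_{\KIP}$ as defined in~\eqref{ch_distillable} --- i.e.\ that it is monotone under arbitrary $n$-channel $\KIP$ processes with no assumed causal structure, behaves additively or amortises across uses, and survives the diamond-norm limit --- but your proposal leaves this entire programme as a to-do. This is precisely the content of the result of Ref.~\cite{regula_2020-2} that the paper invokes: $Q_{\KIP}(\Lambda) \leq \inf_{\Gamma\in\KI}D_{\max}(\Lambda\|\Gamma)$ holds as a strong converse bound under general $\KI$-preserving processes. Without either citing that result or actually carrying out the amortisation/strong-converse argument, the capacity half of the theorem remains unproved; once that ingredient is in place, the remainder of your outline (single-shot estimate on $\Omega_3$, then the strict gap $\log_2\tfrac32 < 1$) coincides with the paper's proof.
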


The above result establishes the fundamental irreversibility of the theory of manipulation of point-to-point channels, and it is therefore in direct analogy with the other main findings of~\cite{lami_2021-1} concerning the theory of bipartite states. In the above setting, we consider as free all those protocols that in some sense do not introduce additional entanglement into the system, a philosophy encapsulated in our choice of free operations as KE-preserving processes. It was already known~\cite{Wilde2018} that the theory of quantum channel manipulation is irreversible when only local operations and classical communication are allowed, while here we extend this to the much broader class of KEP transformations.

As we have mentioned previously, a classic result of quantum information known as the reverse quantum Shannon theorem~\cite{Bennett2002, Bennett2014, Berta2011} establishes instead the \emph{reversibility} of the theory under different circumstances, i.e.\ when unlimited entanglement is given for free, and instead it is classical communication that is deemed a costly resource. Since it is easy to see that this latter approach does not comply with our assumptions, our results and the reverse quantum Shannon theorem are not in direct contradiction and instead complement each other. Indeed, Theorem~\ref{irreversibility_thm_ch} can be thought of as a general no-go result: when no entanglement creation is allowed, the irreversibility of quantum communication cannot be circumvented even by going beyond LOCC. The ability to generate entanglement is therefore necessary to achieve reversible channel transformations and establish an equivalent of the reverse Shannon theorem.

\begin{proof}[Proof of Theorem~\ref{irreversibility_thm_ch}]
The lower bound on $E_C$ follows from Theorem~\ref{main_tool_thm_ch}: we have that
\begin{equation}\begin{aligned}
    E_{C,\, \KIP}^\epsilon (\Omega_3) &\geq \ten(\Omega_3)\\
&= \sup_{\rho} \ten([\idc \otimes \Omega_3](\rho))\\
&\geq \ten([\idc \otimes \Omega_3](\Phi_3))\\
&=\ten(\omega_3)\\
&= 1,
\end{aligned}\end{equation}
where the last equality was shown in Theorem~S9 of~\cite{lami_2021-1}.

To bound the quantum capacity of $\Omega_3$, we will use the channel divergence based on the max-relative entropy $D_{\max}$~\cite{Datta2009,Berta2013} (also known as generalised robustness). This bound first appeared in Ref.~\cite{christandl_2017} for transformation protocols $\Upsilon$ restricted to adaptive LOCC quantum combs. Recently, it was shown in Ref.~\cite{regula_2020-2} that the max-relative entropy in fact provides a strong converse bound on quantum capacity assisted by general, $\KI$-preserving quantum processes --- specifically, it holds that
\begin{equation}\begin{aligned}
    Q_{\KIP}(\Omega_3) \leq \inf_{\Gamma \in \KI} D_{\max} (\Omega_3 \| \Gamma),
\end{aligned}\end{equation}
where
\begin{equation}\begin{aligned}\label{eq:dmax_ch}
    D_{\max}(\Lambda \| \Gamma) &= \log_2 \inf \lset 1+\lambda \bar \Lambda + \lambda \Xi = (1+\lambda) \Gamma,\; \Xi \in \mathrm{CPTP} \rset.
\end{aligned}\end{equation}
Since the completely dephasing channel $\Delta$ is explicitly entanglement breaking, we get
\begin{equation}\begin{aligned}
   Q_{\KIP}(\Omega_3) &\leq D_{\max} (\Omega_3 \| \Delta)\\
   &\leq \log_2 \frac{3}{2},
\end{aligned}\end{equation}
where we used the ansatz $\Omega_3 + \frac{1}{2} \idc_{3} = \frac{3}{2} \Delta$ as a feasible solution for~\eqref{eq:dmax_ch}.
\end{proof}

Previous lower bounds on the entanglement cost fall broadly into two categories. The first one is quantities that require complicated optimisation and are typically intractable in practice, such as the regularised relative entropy of entanglement $E_r^\infty$~\cite{Vedral1997} or the squashed entanglement~\cite{Tucci1999, squashed, faithful, Shirokov-sq}. The second type are \emph{computable} measures, which can be efficiently evaluated. The latter category includes the measured relative entropy of entanglement~\cite{Piani2009} or the SDP lower bound of Ref.~\cite{irreversibility-PPT}. Importantly, to date, all of the computable bounds were in fact lower bounds on the regularised relative entropy of entanglement, and thus they can never perform better than the bound obtained using $E_r^\infty$. Our bound based on $\ten$, on the other hand, can be strictly better: since the quantum relative entropy is upper bounded by $D_{\max}$~\cite{Datta2009}, we have that
\begin{equation}\begin{aligned}
 E_{C,\, \KIP}^\epsilon (\Omega_3) \geq \ten(\Omega_3) > \log_2 \frac32 &\geq \inf_{\Gamma \in \KI} D_{\max}(\Omega_3 \| \Gamma) \\
  &= \inf_{\Gamma \in \KI} \sup_{\rho} D_{\max} (\idc \otimes \Omega_3(\rho) \| \idc \otimes \Gamma(\rho))\\
  &\geq \sup_{\rho} E^\infty_{r,\K} (\idc \otimes \Omega_3(\rho)),
\end{aligned}\end{equation}
where the \br{equality in the} second line was shown in~\br{\cite[Lemma~12]{Berta2018}}, and
\begin{equation}\begin{aligned}
  E_{r,\K}^\infty (\omega) \coloneqq \lim_{n\to\infty} \frac{1}{n} \inf_{\sigma \in \K^1_{A^nB^n}} \! D(\omega^{\otimes  n}\|\sigma)
\end{aligned}\end{equation}
with $D(\omega \| \sigma) = \Tr \omega (\log_2 \omega - \log_2 \sigma)$.
This shows that the tempered negativity bound --- itself efficiently computable as a semidefinite program --- can not only outperform all other computable bounds, but even the regularised relative entropy bound.

\section{Proof of Lemma~\ref{lem:rob_psi_ineq}}\label{sec:robustness_ch_proof}

Let us restate the result used before for the reader's convenience.

{
\renewcommand{\thethm}{\ref{lem:rob_psi_ineq}}
\begin{lemma}\label{lem:rob_psi_ineq2}
For any channel $\Lambda: A \to B$, it holds that
\begin{equation}\begin{aligned}
    R^s_{\KI} ( \Lambda ) = R^s_{\K} ( \Lambda ) \coloneqq \sup_{\rho} R^s_\K\left(\left[\idc \otimes \Lambda\right] (\rho)\right),
\end{aligned}\end{equation}
where the maximisation is over all states $\rho \in \D(\HH_A \otimes \HH_A)$ (or, equivalently, over all pure states $\psi$, or over states $\rho \in \D(\HH_R \otimes \HH_A)$ with $R$ arbitrary).
\end{lemma}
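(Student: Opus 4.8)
My plan is to establish the two inequalities separately, after first reducing the input optimisation. Because $R^s_\K$ is convex and invariant under isometries acting on the reference system, the quantity $\sup_\rho R^s_\K([\idc\otimes\Lambda](\rho))$ is unchanged if the supremum is restricted to pure $\psi_{RA}$, and every such $\psi$ has Schmidt rank at most $\dim\HH_A$, so a reference $R\cong A$ already suffices; this settles the parenthetical equivalences. The easy inequality $R^s_{\KI}(\Lambda)\geq R^s_\K(\Lambda)$ then follows by restriction: from any feasible channel decomposition $\Lambda+\lambda\Gamma=(1+\lambda)\Gamma'$ with $\Gamma,\Gamma'\in\KI$, evaluating $\idc\otimes(\cdot)$ on a fixed state $\rho$ gives $[\idc\otimes\Lambda](\rho)+\lambda[\idc\otimes\Gamma](\rho)=(1+\lambda)[\idc\otimes\Gamma'](\rho)$ with both $[\idc\otimes\Gamma](\rho)$ and $[\idc\otimes\Gamma'](\rho)$ in $\K^1$, i.e.\ a feasible \emph{state} decomposition certifying $R^s_\K([\idc\otimes\Lambda](\rho))\leq\lambda$; optimising over $\lambda$ and $\rho$ closes this direction.

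The content lies in the reverse bound $R^s_{\KI}(\Lambda)\leq R^s_\K(\Lambda)$; set $\lambda^\ast\coloneqq R^s_\K(\Lambda)$. Conceptually, a single $\Gamma\in\KI$ provides a consistent noise state $[\idc\otimes\Gamma](\rho)\in\K^1$ for \emph{every} input at once, so $R^s_{\KI}(\Lambda)$ is the cost of one simultaneous, linear and completely positive choice of noise, while the right-hand side permits the noise to be adapted to each $\rho$ separately. The lemma asserts that these agree, i.e.\ that the per-input decompositions can be glued into a single channel; heuristically this is an exchange of $\inf_{\Gamma\in\KI}$ and $\sup_\rho$, which I would justify by compactness rather than by a direct Sion-type theorem.

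Concretely, I would fix $\lambda>\lambda^\ast$ and consider the convex set $\mathcal A_\lambda\coloneqq\{(1+\lambda)\Gamma'-\lambda\Gamma:\Gamma,\Gamma'\in\KI\}$, which is exactly the set of channels with $R^s_{\KI}\leq\lambda$. In finite dimensions $\KI$ is a closed subset of the compact unit ball $\Xi$ of Lemma~\ref{Xi_weak*op_compact_lemma}, hence compact, so $\mathcal A_\lambda$, as the continuous affine image of $\KI\times\KI$, is compact and convex. Were $\Lambda\notin\mathcal A_\lambda$, Hahn--Banach separation in the weak*-operator topology would yield a test functional --- realised through a finite-dimensional reference $R$, a state $\rho$ on $RA$, and a Hermitian $X$ on $RB$ --- with $\Tr(X[\idc\otimes\Lambda](\rho))$ strictly above $(1+\lambda)\sup_{\Gamma'\in\KI}\Tr(X[\idc\otimes\Gamma'](\rho))-\lambda\inf_{\Gamma\in\KI}\Tr(X[\idc\otimes\Gamma](\rho))$. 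Renormalising $X$ into a witness $Y\in[-\id,\id]_{\K^*}$ should then force $R^s_\K([\idc\otimes\Lambda](\rho))>\lambda$, contradicting $\lambda>\lambda^\ast$ and giving $\Lambda\in\mathcal A_\lambda$ for every $\lambda>\lambda^\ast$.

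The step I expect to be genuinely delicate --- and the reason the supremum over \emph{all} inputs is indispensable --- is this renormalisation into a valid $\K^*$-witness. Because $\Gamma$ is trace preserving, the outputs $[\idc\otimes\Gamma](\rho)$ do not range over all of $\K^1$ but only over the slice with fixed reduced state $\Tr_A\rho$ on $R$; the extrema appearing in the separation are therefore \emph{constrained} maxima, and they need not calibrate $[-\id,\id]_{\K^*}$, which is defined against the whole of $\K^1$. This is precisely the phenomenon that makes $R^s_{\KI}(\Lambda)$ strictly exceed the robustness of the bare Choi state. I would handle it by treating the marginal constraint through a Lagrange multiplier $N_R$ on $R$ and invoking strong duality (choosing $\rho$ with full-rank reduced state on $A$ to secure a Slater point), so that replacing $X$ by $X-N_R\otimes\id_B$ makes the unconstrained and slice-constrained extrema coincide and the calibration goes through. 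The second technical layer is the passage to infinite dimensions: trace preservation is \emph{not} weak*-operator closed, so a weak*-operator limit of $\KI$-channels is in general only completely positive and trace non-increasing. I would therefore prove the equality first on finite-dimensional truncations of $\HH_A$ and $\HH_B$, extract a weak*-operator convergent subnet of the approximating channels via Lemma~\ref{Xi_weak*op_compact_lemma}, and use the lower semicontinuity of $R^s_{\KI}$ in this topology to certify that the limit still witnesses $R^s_{\KI}(\Lambda)\leq\lambda^\ast$.
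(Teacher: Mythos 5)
Your architecture --- the easy direction by restriction, a compactness--duality argument for the converse in finite dimensions, then truncation plus weak*-operator semicontinuity --- has the same overall shape as the paper's proof, and your easy direction is identical to the paper's. The genuine gap sits exactly where you placed your warning, and your proposed fix does not repair it. A witness $Y\in[-\id,\id]_{\K^*}$ must obey the \emph{two-sided} condition $|\Tr[YW]|\leq \Tr W$ on all of $\K$, and a single additive shift $X\mapsto X-N_R\otimes\id_B$ can calibrate only one side: strong duality supplies one multiplier for the slice-constrained supremum (the slice being the states in $\K^1$ arising as outputs $[\idc\otimes\Gamma](\rho)$, whose reference marginal is fixed to $\Tr_A\rho$) and, in general, a \emph{different} multiplier for the slice-constrained infimum. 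Quantitatively, if $U(N)$ and $L(N)$ denote the unconstrained supremum and infimum of $\Tr[(X-N\otimes\id)\sigma]$ over $\sigma\in\K^1$, your contradiction needs a single $N$ for which $\Tr\big(X\,[\idc\otimes\Lambda](\rho)\big)$ exceeds $(1+\lambda)U(N)-\lambda L(N)$ up to the constant by which the shift moves slice values; minimising that expression over $N$ yields, by another minimax step, the support function of the set $\{(1+\lambda)\sigma-\lambda\tau \,:\, \sigma,\tau\in\K^1,\ \Tr_B[(1+\lambda)\sigma-\lambda\tau]=\Tr_A\rho\}$, in which only the \emph{combined} marginal is fixed. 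This set is strictly larger than the set you separated $\Lambda$ from, where $\sigma$ and $\tau$ individually lie in the slice, so the separation inequality does not imply the calibrated one. A sanity check shows the recipe cannot work as stated: in finite dimensions every linear functional on maps can be realised in Choi form, i.e.\ with $\rho$ the maximally entangled state, so if the recalibration went through for the $\rho$ handed to you by the separation it would prove $R^s_\KI(\Lambda)=R^s_\K(J_\Lambda)$ for the Choi state $J_\Lambda$ --- which is false in general, as the paper notes and as you yourself observe. The missing idea is that the marginal multiplier must act \emph{multiplicatively on the input state}, not additively on the witness: this is exactly what the paper's Lemma~\ref{lem:rob_channels_psi_finite} does, writing $1+R^s_\KI(\Lambda)=\min\{d_A\left\|\Tr_BJ_\Theta\right\|_\infty \,:\, J_\Lambda\leq_\K J_\Theta,\ J_\Theta\in\K\}$ after the subchannel relaxation, swapping $\min_{J_\Theta}$ and $\max_{\rho_A}$ by Sion's minimax theorem, and conjugating by $\rho_A^{1/2}\otimes\id$ so that the inner problem becomes $1+R^s_\K\big([\idc\otimes\Lambda](\psi)\big)$ with $\psi=(\rho_A^{1/2}\otimes\id)\,d_A\Phi\,(\rho_A^{1/2}\otimes\id)$ --- all in the primal (decomposition) picture, with no witness calibration at all.

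The infinite-dimensional layer follows the paper's strategy but leaves two necessary steps unsupplied. First, the weak*-operator lower semicontinuity of $R^s_\KI$ is not an off-the-shelf fact: precisely because trace preservation is not weak*-operator closed (the obstruction you correctly identify), the paper must first prove $R^s_\KI=R^s_{\wt\KI}$, i.e.\ that one may optimise over trace-non-increasing $\K$-enforcing subchannels (Lemma~\ref{lem:channel_rob_subnormalised}), then establish weak*-operator compactness of $\wt\KI$ (Corollary~\ref{KE_tilde_compact_cor}), and only then deduce semicontinuity as the gauge of a closed set (Lemma~\ref{channel_robustness_lsc_lemma}); your sketch invokes the conclusion without this reformulation. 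Second, semicontinuity only gives $R^s_\KI(\Lambda)\leq\liminf_k R^s_\KI(\Lambda_k)$, and the finite-dimensional result expresses $R^s_\KI(\Lambda_k)$ through outputs of the \emph{truncated} channels; to close the sandwich you still need $\sup_\rho R^s_\K\big([\idc\otimes\Lambda_k](\rho)\big)\leq\sup_\rho R^s_\K\big([\idc\otimes\Lambda](\rho)\big)$. The paper obtains this by choosing trace-preserving truncations satisfying $[\idc\otimes\Lambda_k](\rho)=[\idc\otimes\Phi_k]\big([\idc\otimes\Lambda](\rho)\big)$ on truncated inputs, with $\Phi_k$ a $\K$-preserving channel, and invoking monotonicity of $R^s_\K$; this comparison step, and the trace preservation of the truncations (needed to apply the finite-dimensional lemma at all), are absent from your outline.
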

}

We begin with a helpful lemma that will allow us to recast the robustness as an optimisation over sub-normalised quantum operations. Specifically, we define a \emph{\textbf{$\mathbfcal{K}$-enforcing subchannel}} to be a completely positive map $\Gamma$ which satisfies $[\idc_k \otimes \Gamma](X) \in \K_{RB}$ for all $k\in \N$ and $X \geq 0$, and which is also trace non-increasing, in the sense that $\Tr \Gamma(\rho) \leq 1$ for all density operators $\rho$. We denote the set of all $\K$-enforcing subchannels with $\wt\KI$. We then have the following.

\begin{lemma}\label{lem:channel_rob_subnormalised}
The robustness $R^s_{\KI}$ can be equivalently expressed by optimising over $\K$-enforcing subchannels. Specifically, for any positive and trace preserving map $\Lambda$ it holds that
\begin{equation}\begin{aligned}
    R^s_{\KI} ( \Lambda  ) = R^s_{\wt{\KI}} ( \Lambda ) \coloneqq \inf \left\{ \lambda \,:\, \Lambda + \lambda \Gamma \in (1+\lambda) \wt{\KI},\; \Gamma \in \wt{\KI} \right\}.
\end{aligned}\end{equation}
\end{lemma}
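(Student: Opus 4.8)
The plan is to prove the two inequalities separately. The direction $R^s_{\wt{\KI}}(\Lambda) \le R^s_{\KI}(\Lambda)$ is immediate: every $\K$-enforcing channel is in particular a $\K$-enforcing subchannel, so $\KI \subseteq \wt{\KI}$, and any decomposition feasible for $R^s_{\KI}$ is automatically feasible for $R^s_{\wt{\KI}}$; relaxing the feasible set can only lower the infimum. The entire content of the lemma is therefore the reverse inequality $R^s_{\KI}(\Lambda) \le R^s_{\wt{\KI}}(\Lambda)$, which I would establish by fixing an arbitrary feasible point of $R^s_{\wt{\KI}}$ --- a scalar $\lambda \ge 0$ together with subchannels $\Gamma, \Theta \in \wt{\KI}$ obeying $\Lambda + \lambda \Gamma = (1+\lambda)\Theta$ --- and manufacturing from it a decomposition of the \emph{same} cost $\lambda$ that uses genuine channels in $\KI$. (If no feasible point exists the inequality is trivial.)

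To each subchannel I associate its effect operator: let $E_\Gamma, E_\Theta \in \B(\HH_A)$ be the unique operators with $\Tr \Gamma(X) = \Tr[E_\Gamma X]$ and $\Tr \Theta(X) = \Tr[E_\Theta X]$ for all $X \in \T(\HH_A)$, i.e.\ $E_\Gamma = \Gamma^*(\id_B)$ and $E_\Theta = \Theta^*(\id_B)$ via the bounded adjoints. Complete positivity and trace non-increasingness give $0 \le E_\Gamma, E_\Theta \le \id$, so the defects $\id - E_\Gamma$ and $\id - E_\Theta$ are positive. Taking the trace in $\Lambda + \lambda\Gamma = (1+\lambda)\Theta$ and using that $\Lambda$ is trace preserving yields $(1+\lambda)E_\Theta - \lambda E_\Gamma = \id$, whence the two defects are \emph{proportional}: $\id - E_\Theta = \tfrac{\lambda}{1+\lambda}(\id - E_\Gamma)$. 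This proportionality is the crucial structural fact. I then fix any state $\sigma_0 \in \D(\HH_B)$, define the measure-and-prepare completions $\Gamma_0(X) \coloneqq \Tr[(\id - E_\Gamma)X]\,\sigma_0$ and $\Theta_0(X) \coloneqq \Tr[(\id - E_\Theta)X]\,\sigma_0$, and set $\Gamma' \coloneqq \Gamma + \Gamma_0$ and $\Theta' \coloneqq \Theta + \Theta_0$.

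It remains to verify three things. (a) Both $\Gamma'$ and $\Theta'$ are trace preserving, since $\Tr \Gamma'(X) = \Tr[E_\Gamma X] + \Tr[(\id - E_\Gamma)X] = \Tr X$ and likewise for $\Theta'$. (b) Both lie in $\KI$: each is completely positive as a sum of completely positive maps, and each is $\K$-enforcing because $[\idc_R \otimes \Gamma_0](X_{RA}) = \Tr_A[(\id_R \otimes (\id - E_\Gamma))X_{RA}] \otimes \sigma_0$ is a product --- hence separable, hence PPT --- operator whenever $X_{RA} \ge 0$, so $\Gamma_0$ (and $\Theta_0$) is $\K$-enforcing, and the cones $\SEP$, $\PPT$ are closed under addition. (c) The affine relation is preserved: defect proportionality gives $\Theta_0 = \tfrac{\lambda}{1+\lambda}\Gamma_0$, i.e.\ $(1+\lambda)\Theta_0 = \lambda\Gamma_0$, so $\Lambda + \lambda\Gamma' = (1+\lambda)\Theta + \lambda\Gamma_0 = (1+\lambda)\Theta'$. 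Hence $(\lambda, \Gamma', \Theta')$ is feasible for $R^s_{\KI}$, giving $R^s_{\KI}(\Lambda) \le \lambda$, and taking the infimum over feasible points of $R^s_{\wt{\KI}}$ concludes the proof.

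I expect the only genuinely delicate point to be the realisation that a \emph{single} completion state $\sigma_0$ can simultaneously repair both subchannels while keeping the affine identity intact: everything hinges on the fact that trace preservation of $\Lambda$ forces the two trace defects to be proportional with ratio exactly $\lambda/(1+\lambda)$, so that the same $\Gamma_0$ (suitably rescaled) serves as both completions. I would also stress that positivity and trace preservation of $\Lambda$ are the only properties of $\Lambda$ actually used, so the argument indeed covers all positive trace preserving maps as claimed, and that reading $E_\Gamma = \Gamma^*(\id_B)$ through the bounded adjoint $\Gamma^* : \B(\HH_B) \to \B(\HH_A)$ keeps every step valid in the infinite-dimensional setting.
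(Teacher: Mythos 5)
Your proof is correct and follows essentially the same route as the paper: the paper completes each subchannel to a genuine channel via exactly your construction, $\Gamma'(X) = \Gamma(X) + \left[\Tr X - \Tr \Gamma(X)\right]\sigma$ and $\Theta'(X) = \Theta(X) + \left[\Tr X - \Tr \Theta(X)\right]\sigma$ with a single fixed state $\sigma$, and relies on the same fact that these measure-and-prepare corrections are $\K$-enforcing. The trace-defect proportionality $\id - E_\Theta = \tfrac{\lambda}{1+\lambda}(\id - E_\Gamma)$, which you make explicit via effect operators, is left implicit in the paper's one-line verification that $\Lambda + \lambda\Gamma' = (1+\lambda)\Theta'$, so your writeup is simply a more detailed rendition of the same argument.
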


\begin{proof}
First, notice that when $\Lambda + \lambda \Gamma = (1+\lambda) \Theta$ for trace-preserving maps $\Lambda$ and $\Theta$, in the non-trivial case where $\lambda>0$, also $\Gamma$ is automatically constrained to be trace preserving. We thus write
\begin{equation}\begin{aligned}
    R^s_{\KI} ( \Lambda ) = \inf \left\{ \lambda \,:\, \Lambda + \lambda \Gamma \in (1+\lambda) \KI,\; \Gamma \in \wt{\KI} \right\}
\end{aligned}\end{equation}
without loss of generality. Clearly, $R^s_{\KI} ( \Lambda  ) \geq R^s_{\wt{\KI}} ( \Lambda  )$ as the latter minimises over a larger set. Consider then a feasible solution for $R^s_{\wt{\KI}}$ of the form $\Lambda + \lambda \Gamma = (1+\lambda) \Theta$ where $\Gamma, \Theta \in \wt{\KI}$. Define the maps
\begin{equation}\begin{aligned}
    \Theta' (X) \coloneqq \Theta(X) + \left[\Tr X - \Tr \Theta(X)\right] \sigma\\
    \Gamma' (X) \coloneqq \Gamma(X) + \left[\Tr X - \Tr \Gamma(X)\right] \sigma
\end{aligned}\end{equation}
for a fixed state $\sigma \in \T(\HH_B)$. 
Now, $X \mapsto \sigma \,\Tr X$ is a $\K$-enforcing map, which means in particular that $\Gamma', \Theta' \in \KI$ by the convexity of $\K$. But then $\Lambda + \lambda \Gamma' = (1+\lambda) \Theta'$, so $R^s_{\KI} \leq \lambda$. Since this holds for arbitrary feasible $\lambda$, we get $R^s_{\KI} ( \Lambda  ) = R^s_{\wt{\KI}} ( \Lambda  )$ as desired.
\end{proof}

The next ingredient we need is the compactness of $\wt\KI$ with respect to an appropriate topology.

\begin{cor} \label{KE_tilde_compact_cor}
For $\K=\SEP, \PPT$, the set $\wt\KI$ of $\K$-enforcing subchannels is compact with respect to the weak*-operator topology.
\end{cor}

\begin{proof}
First, the cone of positive maps inside $\B\left(\T_A\to \T_B\right)$ is weak*-operator closed. To see this, consider a net\footnote{A net on a set $\pazocal{X}$ is simply a function $f:\pazocal{A}\to \pazocal{X}$, where $\pazocal{A}$ is an arbitrary directed set, i.e.\ a set equipped with a pre-order $\leq$ such that given any two elements $a,b\in \pazocal{A}$ one can find a common upper bound $a,b\leq c\in \pazocal{A}$. In this context, we need to use nets rather than simple sequences because the weak*-operator topology is `non-metrisable', i.e.\ it is not induced by any metric, unless $\dim\HH<\infty$.} of positive maps $(\Lambda_\alpha)_\alpha$ converging to $\Lambda$ in the weak*-operator topology, where $\Lambda_\alpha,\Lambda\in \B\left(\T_A\to \T_B\right)$. For all $X\in \T\left(\HH_A\right)$, $X\geq 0$, and all $\ket{\psi} \in \HH_B$, we have that
$\braket{\psi|\Lambda(X)|\psi} = \lim_\alpha \braket{\psi|\Lambda_\alpha(X)|\psi}\geq 0$, where we computed the limit thanks to the fact that $\ketbra{\psi}$ is a compact operator. Hence, $\Lambda(X)\geq 0$; since this holds for all $X\geq 0$, we deduce that $\Lambda$ is positive, as claimed.

Secondly, also the cone of completely positive map is weak*-operator closed. In fact, with the above notation, $\Lambda_\alpha \tends{w*o}{\alpha} \Lambda$ implies that $\idc_k\otimes \Lambda_\alpha \tends{w*o}{\alpha} \idc_k\otimes \Lambda$ for all $k\in \N$, because tensoring with a finite-dimensional space cannot affect weak*-operator convergence. Since $\idc_k\otimes \Lambda_\alpha$ is positive for all $\alpha$, by the above result so is $\idc_k \otimes \Lambda$. This ensures that $\Lambda$ is completely positive.

Thirdly, it is straightforward to verify that the cone $\cone(\KI)\coloneqq \left\{\mu \Lambda:\, \mu\geq 0,\, \Lambda\in \KI\right\}$ of $\K$-enforcing maps is weak*-operator closed as well. This follows from a similar reasoning as above  --- recalling from Ref.~\cite{Holevo2005} that it suffices to verify that $\left[\idc_k \otimes \Lambda\right](X) \in \K$ for all $X \in \Tp$ for finite-dimensional ancillary spaces --- together with the fact that $\K$ itself is weak*-closed.

Finally, we can write
\begin{equation}
    \wt{\KI} = \Xi \cap \cone(\KI)\, ,
    \label{KE_tilde_intersection}
\end{equation}
where $\Xi$ is the unit ball of the space $\B\left(\T_A\to\T_B\right)$, defined by~\eqref{Xi}. To see why, notice that $\left\|\Lambda\right\|_{1\to 1} = \sup_\rho \Tr \Lambda(\rho)$ whenever $\Lambda$ is positive (in particular, when it is completely positive), so that for completely positive maps $\left\|\Lambda\right\|_{1\to 1}\leq 1$ amounts to $\sup_\rho \Tr \Lambda(\rho)\leq 1$. Having established~\eqref{KE_tilde_intersection}, we deduce that %$\Xi$,
$\wt{\KI}$, being an intersection of a weak*-operator compact (cf.\ Lemma~\ref{Xi_weak*op_compact_lemma}) and a weak*-operator closed set, is itself weak*-operator compact.
\end{proof}

Following the techniques of Refs.~\cite{taming-PRL, taming-PRA}, we now show that the above result implies the lower semicontinuity of the channel robustness $R^s_\KI$.

\begin{lemma} \label{channel_robustness_lsc_lemma}
The channel robustness $R^s_\KI$ is lower semicontinuous with respect to the weak*-operator topology, in the sense that $\Lambda_\alpha \tends{\emph{w*o}}{\alpha} \Lambda$ for a net $(\Lambda_\alpha)_\alpha$ implies that
\begin{equation}
    R^s_\KI\left( \Lambda\right) \leq \liminf_\alpha R^s_\KI\left( \Lambda_\alpha \right) .
\end{equation}
\end{lemma}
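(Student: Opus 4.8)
The plan is to reduce the statement to the subchannel formulation of the robustness and then exploit compactness of the free set, following the techniques of Refs.~\cite{taming-PRL, taming-PRA}. By Lemma~\ref{lem:channel_rob_subnormalised} we may replace $R^s_{\KI}$ throughout by $R^s_{\wt{\KI}}$, so that every feasible decomposition of a channel $\Lambda'$ takes the form $\Lambda' + \lambda \Gamma = (1+\lambda)\Theta$ with $\Gamma, \Theta \in \wt{\KI}$, and by Corollary~\ref{KE_tilde_compact_cor} the set $\wt{\KI}$ is weak*-operator compact. This compactness of the free set is precisely the ingredient that powers the standard ``compact free set $\Rightarrow$ lower semicontinuity of the robustness'' argument.

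First I would set $L \coloneqq \liminf_\alpha R^s_{\KI}(\Lambda_\alpha)$ and assume $L<\infty$, there being nothing to prove otherwise. By definition of the $\liminf$ of a net, I pass to a subnet $(\Lambda_\beta)_\beta$ along which $R^s_{\KI}(\Lambda_\beta)\to L$; note that $\Lambda_\beta\to\Lambda$ in the weak*-operator topology still holds, being a subnet of the original net. Fixing an arbitrary $\epsilon>0$, for each $\beta$ I select, via Lemma~\ref{lem:channel_rob_subnormalised}, a near-optimal decomposition $\Lambda_\beta + \lambda_\beta \Gamma_\beta = (1+\lambda_\beta)\Theta_\beta$ with $\Gamma_\beta,\Theta_\beta \in \wt{\KI}$ and $0\le \lambda_\beta \le R^s_{\KI}(\Lambda_\beta)+\epsilon$; for $\beta$ large enough these scalars all lie in the compact interval $[0,\,L+1+\epsilon]$.

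The core of the argument is a single extraction of a convergent subnet. Since, eventually, the triples $(\lambda_\beta,\Gamma_\beta,\Theta_\beta)$ live in the compact product space $[0,\,L+1+\epsilon]\times \wt{\KI}\times \wt{\KI}$ (using Corollary~\ref{KE_tilde_compact_cor}), I extract a further subnet along which $\lambda_\beta\to \lambda^\star$ in $\R$ and $\Gamma_\beta\to\Gamma$, $\Theta_\beta\to\Theta$ in the weak*-operator topology, with the limit point satisfying $\lambda^\star\in[0,\,L+1+\epsilon]$ and $\Gamma,\Theta\in\wt{\KI}$. Passing to the weak*-operator limit in the decomposition then yields $\Lambda + \lambda^\star \Gamma = (1+\lambda^\star)\Theta$: the left-hand side converges because $\Lambda_\beta\to\Lambda$ by hypothesis and $\lambda_\beta\Gamma_\beta\to\lambda^\star\Gamma$, and the right-hand side because $(1+\lambda_\beta)\Theta_\beta\to(1+\lambda^\star)\Theta$. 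This exhibits a feasible decomposition of $\Lambda$ at parameter $\lambda^\star$, so $R^s_{\KI}(\Lambda)=R^s_{\wt{\KI}}(\Lambda)\le \lambda^\star \le L+\epsilon$; letting $\epsilon\downarrow 0$ gives $R^s_{\KI}(\Lambda)\le L=\liminf_\alpha R^s_{\KI}(\Lambda_\alpha)$, as required.

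I expect the main obstacle to be the topological bookkeeping forced by the non-metrizability of the weak*-operator topology: everything must be phrased with nets and subnets rather than sequences, and one must check that the scalar--operator products pass to the limit, i.e.\ that $\lambda_\beta\to\lambda^\star$ together with $\Gamma_\beta\to\Gamma$ implies $\lambda_\beta\Gamma_\beta\to\lambda^\star\Gamma$. This reduces to the elementary observation that, for each $X\in\T_A$ and each compact $Y\in\CC_B$, one has $\Tr[Y\,(\lambda_\beta\Gamma_\beta)(X)]=\lambda_\beta\,\Tr[Y\,\Gamma_\beta(X)]\to\lambda^\star\,\Tr[Y\,\Gamma(X)]$, since the $\lambda_\beta$ are bounded and convergent while the functionals $\Tr[Y\,\Gamma_\beta(X)]$ converge by definition of weak*-operator convergence. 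The genuine content of the statement is thus carried entirely by Corollary~\ref{KE_tilde_compact_cor}; once that compactness is in hand, the reduction via Lemma~\ref{lem:channel_rob_subnormalised} and the extraction of a common convergent subnet complete the proof.
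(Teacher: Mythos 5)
Your proof is correct, and it rests on the same two pillars as the paper's: the subchannel reformulation $R^s_\KI = R^s_{\wt\KI}$ of Lemma~\ref{lem:channel_rob_subnormalised} and the weak*-operator compactness of $\wt\KI$ from Corollary~\ref{KE_tilde_compact_cor}; the difference lies in how the compactness is cashed in. The paper identifies $2R^s_\KI+1$ with the Minkowski gauge of $\co\left(\wt\KI\cup(-\wt\KI)\right)$, observes that this set inherits weak*-operator compactness (hence closedness) from $\wt\KI$, and concludes because the sublevel sets of the gauge are precisely the dilations $\lambda\,\co\left(\wt\KI\cup(-\wt\KI)\right)$, which are closed. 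You instead argue directly on near-optimal feasible decompositions: pass to a subnet realising $L=\liminf_\alpha R^s_\KI(\Lambda_\alpha)$, pick $\Lambda_\beta+\lambda_\beta\Gamma_\beta=(1+\lambda_\beta)\Theta_\beta$ with $\lambda_\beta\leq R^s_\KI(\Lambda_\beta)+\epsilon$, extract a convergent subnet of the triples $(\lambda_\beta,\Gamma_\beta,\Theta_\beta)$ from the compact product $[0,L+1+\epsilon]\times\wt\KI\times\wt\KI$, and pass to the limit pairing by pairing against $X\in\T_A$, $Y\in\CC_B$; since these pairings separate points of $\B\left(\T_A\to\T_B\right)$ and $\wt\KI$ is closed (being compact in a Hausdorff topology), the limit triple is a genuine feasible decomposition of $\Lambda$ at level $\lambda^\star\leq L+\epsilon$. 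Both routes are valid. The gauge argument buys brevity and dispenses with subnet bookkeeping altogether; your argument is more elementary and self-contained --- it needs neither the gauge identity~\eqref{channel_rob_gauge} nor the compactness of the convex hull of a union of compact convex sets --- at the price of the explicit net manipulations, which you handle correctly (the cluster-point property of the liminf of a net, the eventual boundedness of the scalars, and the joint convergence $\lambda_\beta\Gamma_\beta\to\lambda^\star\Gamma$).
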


\begin{proof}
Due to Lemma~\ref{lem:channel_rob_subnormalised}, we can see that $2R^s_\KI+1$ is the gauge function (Minkowski functional) with respect to the set $\co \left( \wt\KI \cup \big(-\wt\KI\big)\right)$, that is,
\begin{equation}
2R^s_\KI(\Lambda)+1 = \inf \lset \lambda \bar \Lambda \in \lambda \co \left( \wt\KI \cup \big(-\wt\KI\big)\right) \rset.
\label{channel_rob_gauge}
\end{equation}
Crucially, from the weak*-operator compactness of $\wt\KI$ established in Corollary~\ref{KE_tilde_compact_cor} we have that $\co \left( \wt\KI \cup \big(-\wt\KI\big)\right)$ is also weak*-operator compact, which in particular implies that it is weak*-operator closed. The proof is then completed by noting that the gauge of a closed set is always lower semicontinuous. More explicitly, lower semicontinuity of $2R^s_\KI+1$ is equivalent~\cite[Proposition~2.5]{BARBU} to the weak*-operator closedness of the sublevel sets
\begin{equation}\begin{aligned}
s_\lambda \coloneqq& \lset \Lambda \bar 2R^s_\KI(\Lambda)+1 \leq \lambda \rset\\
=& \lset \Lambda \bar \Lambda \in \lambda\, \co \left( \wt\KI \cup \big(-\wt\KI\big)\right) \rset\\
=& \lambda \co \left( \wt\KI \cup \big(-\wt\KI\big)\right)
\end{aligned}\end{equation}
for all $\lambda$, which is immediate from the closedness of $\co \left( \wt\KI \cup \big(-\wt\KI\big)\right)$.
\end{proof}

We then proceed by establishing the identity in Lemma~\ref{lem:rob_psi_ineq} for the case of finite-dimensional channels.

\begin{lemma}\label{lem:rob_channels_psi_finite}
For any point-to-point channel $\Lambda : A \to B$ where $d_A, d_B < \infty$, it holds that
\begin{equation}\begin{aligned}
    R^s_{\KI} ( \Lambda  ) = \max_{\rho} R^s_\K\left(\left[\idc \otimes \Lambda\right] (\rho)\right) .
\end{aligned}\end{equation}
\end{lemma}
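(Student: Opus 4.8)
The two inequalities have very different flavours. The bound $R^s_\KI(\Lambda)\ge\max_\rho R^s_\K\!\left([\idc\otimes\Lambda](\rho)\right)$ is immediate: given any feasible decomposition $\Lambda+\lambda\Gamma=(1+\lambda)\Theta$ with $\Gamma,\Theta\in\KI$, applying $\idc\otimes(\cdot)$ to a fixed input $\rho$ gives $[\idc\otimes\Lambda](\rho)+\lambda[\idc\otimes\Gamma](\rho)=(1+\lambda)[\idc\otimes\Theta](\rho)$ with $[\idc\otimes\Gamma](\rho),[\idc\otimes\Theta](\rho)\in\K^1_{AB}$, which certifies $R^s_\K\!\left([\idc\otimes\Lambda](\rho)\right)\le\lambda$; optimising over $\rho$ and over the decomposition yields the claim. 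So the content is in the reverse inequality.

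For that I would work entirely with Choi operators and invoke conic (semidefinite) duality, which is legitimate in finite dimension. Writing $J_\Lambda=[\idc\otimes\Lambda](\Omega)$ with $\ket\Omega=\sum_i\ket{ii}$ (so that $\Tr_B J_\Lambda=\id_A$), and recalling that for $\K=\SEP,\PPT$ a channel is $\K$-enforcing exactly when $J_\Lambda\in\K_{AB}$, Lemma~\ref{lem:channel_rob_subnormalised} lets me recast the channel robustness as
\[
1+2R^s_\KI(\Lambda)=\min\left\{\tfrac{1}{d_A}\Tr(P+Q)\ :\ P-Q=J_\Lambda,\ P,Q\in\K_{AB},\ \Tr_B P\propto\id_A,\ \Tr_B Q\propto\id_A\right\},
\]
where the marginal constraints $\Tr_B P,\Tr_B Q\propto\id_A$ encode trace preservation and are precisely what distinguishes this from the Choi-state robustness. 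On the other side, I would start from the dual form of $R^s_\K$ and parametrise a pure input as $\ket\psi=(S\otimes\id)\ket\Omega$, so that $[\idc\otimes\Lambda](\ketbra{\psi})=(S\otimes\id)J_\Lambda(S^\dagger\otimes\id)$; conjugation by $S\otimes\id$ (which preserves $\K^*$, being a local map on the reference) turns the witness constraint $\id\pm X\in\K^*$ into $\eta\otimes\id\pm\hat X\in\K^*$ with $\eta=S^\dagger S$ a density operator. Since $R^s_\K$ is convex the supremum over all inputs is attained on pure states (with $R\cong A$), giving
\[
1+2\max_\rho R^s_\K\!\left([\idc\otimes\Lambda](\rho)\right)=\max\left\{\Tr[\hat X J_\Lambda]\ :\ \eta\otimes\id\pm\hat X\in\K^*,\ \eta\ge0,\ \Tr\eta=1\right\},
\]
whose semidefinite dual (strong duality holding by a Slater point, e.g.\ $Q=\beta\,\tfrac{\id_A}{d_A}\otimes\tfrac{\id_B}{d_B}$ and $P=J_\Lambda+Q$ for large $\beta$) is $\min\left\{\|\Tr_B(P+Q)\|_\infty\ :\ P-Q=J_\Lambda,\ P,Q\in\K_{AB}\right\}$.

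Comparing the two minimisations, they are identical except that the channel program additionally imposes $\Tr_B P,\Tr_B Q\propto\id_A$; note that on any such feasible point the objectives agree, since $\Tr_B(P+Q)\propto\id_A$ forces $\tfrac{1}{d_A}\Tr(P+Q)=\|\Tr_B(P+Q)\|_\infty$. The crux, and the step I expect to be the main obstacle, is thus to show that this marginal constraint may be imposed \emph{for free}. I would do this by a direct symmetrisation of the reference marginal: starting from any $P,Q\in\K_{AB}$ with $P-Q=J_\Lambda$ and $Y:=\Tr_B(P+Q)\ge0$, set $D:=\tfrac12\big(\|Y\|_\infty\id_A-Y\big)\otimes\tau_B$ for an arbitrary state $\tau_B$. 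As $\|Y\|_\infty\id_A-Y\ge0$, the operator $D$ is a positive product operator and hence lies in $\K_{AB}$, so $P+D,Q+D\in\K_{AB}$; the difference $(P+D)-(Q+D)=J_\Lambda$ is preserved, while $\Tr_B(P+D+Q+D)=Y+(\|Y\|_\infty\id_A-Y)=\|Y\|_\infty\id_A\propto\id_A$. Thus $(P+D,Q+D)$ is feasible for the channel program with objective exactly $\|Y\|_\infty$, establishing the remaining inequality $R^s_\KI(\Lambda)\le\max_\rho R^s_\K\!\left([\idc\otimes\Lambda](\rho)\right)$. The main technical care is therefore not in the symmetrisation itself but in rigorously justifying the two SDP reformulations — in particular the reduction to pure inputs via the $S$-conjugation and the validity of strong duality — after which the construction above closes the argument.
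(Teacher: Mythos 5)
Your proof is correct, but it reaches the conclusion along a genuinely different route than the paper, so a comparison is worthwhile. The paper proves the whole identity as a single chain of equalities run from the channel side: it first uses the subchannel relaxation of Lemma~\ref{lem:channel_rob_subnormalised} to weaken the trace-preservation constraint $\Tr_B J_\Theta = \id/d_A$ to the inequality $\Tr_B J_\Theta \leq \lambda\, \id/d_A$, rewrites the resulting optimal value as $\min\lset d_A \left\|\Tr_B J_\Theta\right\|_\infty \bar J_\Lambda \leq_\K J_\Theta,\ J_\Theta \in \K \rset$ with $\left\|\Tr_B J_\Theta\right\|_\infty = \max_{\rho_A} \Tr\left[(\rho_A\otimes\id)J_\Theta\right]$, swaps $\min$ and $\max$ by Sion's minimax theorem, and then recognises the inner minimisation at fixed $\rho_A>0$ as $1+R^s_\K\left([\idc\otimes\Lambda](\psi_{\rho_A})\right)$ by exactly the $\rho_A^{1/2}$-conjugation and purification trick you also use. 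Your argument runs the duality from the state side instead: you dualise the witness form of $R^s_\K$ (after the $S$-conjugation) into the marginal-norm SDP $\min\lset \left\|\Tr_B(P+Q)\right\|_\infty \bar P-Q=J_\Lambda,\ P,Q\in\K \rset$, and then close the gap with the channel program via the completion $D=\frac12\left(\|Y\|_\infty\id_A - Y\right)\otimes\tau_B$. These two devices are precisely the counterparts of the paper's two: your explicit Lagrangian dualisation (plus Slater) does the work of Sion's theorem, and your symmetrisation is the Choi-level incarnation of the map completion $\Theta'(X)=\Theta(X)+\left[\Tr X - \Tr\Theta(X)\right]\sigma$ used to prove Lemma~\ref{lem:channel_rob_subnormalised} itself. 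What the paper's arrangement buys is economy of justification: Sion's theorem needs only bilinearity and compactness of the set of states $\rho_A$, so no strong-duality certificate must be produced, and the pure-state reduction appears once, at the very end. What yours buys is a constructive, fully explicit SDP picture of both programs, at the price of the two technical debts you flag yourself: strong duality, and the pure-input/full-rank-$\eta$ reduction (the latter can be discharged by perturbing a feasible pair to $\eta_\epsilon=(1-\epsilon)\eta+\epsilon\,\id/d_A$, $\hat X_\epsilon=(1-\epsilon)\hat X$, which remains feasible because $\id\otimes\id\in\K^*$).

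Two details to make explicit when writing this up. First, your channel program constrains $\Tr_B P\propto\id_A$ and $\Tr_B Q\propto\id_A$ \emph{separately}, while your symmetrisation only verifies the sum condition $\Tr_B(P+2D+Q)=\|Y\|_\infty\id_A$. This suffices, but only because $\Tr_B(P-Q)=\Tr_B J_\Lambda=\id_A$ holds identically, which gives $\Tr_B(P+D)=\frac{\|Y\|_\infty+1}{2}\,\id_A$ and $\Tr_B(Q+D)=\frac{\|Y\|_\infty-1}{2}\,\id_A$ individually, the latter coefficient being automatically nonnegative since $Y=\id_A+2\Tr_B Q\geq\id_A$. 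Second, your dualisation identifies the dual cone of $\K^*$ with $\K$, which relies on $\SEP$ and $\PPT$ being closed convex cones in finite dimension; this is true, but deserves a word.
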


\begin{proof}[Proof of Lemma~\ref{lem:rob_channels_psi_finite}]
We use $J_\Gamma \coloneqq [\idc \otimes \Gamma](\Phi_{d_A})$ to denote the Choi state of a channel $\Gamma : A \to B$. Recall that $\Gamma \in \KI$ if and only if $J_\Gamma \in \K$~\cite{HSR,horodecki_2000-1}. By Lemma~\ref{lem:channel_rob_subnormalised} we can write
\begin{equation}\begin{aligned}
    R^s_{\KI} ( \Lambda ) &= \min \lset \lambda \bar J_\Lambda + \lambda J_\Gamma = (1+\lambda) J_\Theta,\; J_\Theta, J_\Gamma \in \K,\; \Tr_B J_\Theta = \frac{\id}{d_A} \rset\\
    &= \min \lset \lambda \bar J_\Lambda \leq_{\K} (1+\lambda) J_\Theta,\; J_\Theta \in \K,\; \Tr_B J_\Theta = \frac{\id}{d_A} \rset\\
   &= \min \lset \lambda \bar J_\Lambda \leq_{\K} (1+\lambda) J_\Theta,\; J_\Theta \in \K,\; \Tr_B J_\Theta \leq \frac{\id}{d_A} \rset,
\end{aligned}\end{equation}
where $\leq_{\K}$ denotes inequality with respect to the cone $\K$, in the sense that $X \leq_{\K} Y \iff Y - X \in \K$. We then have
\begin{equation}\begin{aligned}
    R^s_{\KI} ( \Lambda ) + 1 &= \min \lset \lambda \bar J_\Lambda \leq_{\K} J_\Theta,\; J_\Theta \in \K,\; \Tr_B J_\Theta \leq \lambda \frac{\id}{d_A} \rset\\
    &= \min \lset d_A \left\|\Tr_B J_\Theta\right\|_{\infty} \bar J_\Lambda \leq_{\K} J_\Theta,\; J_\Theta \in \K \rset\\
    &= \min_{\substack{ J_\Lambda \leq_{\K} J_\Theta \\ J_\Theta \in \K}} \max_{\rho_A} d_A \Tr \left[ (\rho_A \otimes \id) J_\Theta \right]\\
    &= \max_{\rho_A} \min_{\substack{ J_\Lambda \leq_{\K} J_\Theta \\ J_\Theta \in \K}} d_A \Tr \left[ (\rho_A \otimes \id) J_\Theta \right]
\end{aligned}\end{equation}
by Sion's minimax theorem~\cite{Sion}\footnote{\tcb{Sion's theorem gives us sufficient conditions for a function $f:\pazocal{X}\times \pazocal{Y}\to \R$ to satisfy the `minimax' property
\bb
\sup_{y\in \pazocal{Y}} \inf_{x\in \pazocal{X}} f(x,y) = \inf_{x\in \pazocal{X}} \sup_{y\in \pazocal{Y}} f(x,y)\, .
\ee
A set of conditions under which the above equality holds is as follows: (i)~$\pazocal{X}$ is compact and convex; (ii)~$\pazocal{Y}$ is convex; (iii)~$f(\cdot, y)$ is convex and lower semi-continuous on $\pazocal{X}$ for every $y\in \pazocal{Y}$; and (iv)~$f(x,\cdot)$ is concave and upper semi-continuous on $\pazocal{Y}$ for every $x\in \pazocal{X}$. In our case, $f$ is actually a bilinear function on a finite-dimensional space, hence verifying the above conditions (i)--(iv) is straightforward.}}. The rest of the proof will follow an argument similar to~\cite[Lemma~7]{berta_2017}. By continuity, it suffices to consider $\rho_A > 0$. Since conjugation by a product operator preserves $\K$-ness (that is, it cannot map a separable / PPT operator to an operator which is not separable / PPT, respectively) we have that
\begin{equation}\begin{aligned}
    J_\Lambda \leq_\K J_\Theta &\iff (\rho_A^{1/2} \otimes \id) J_\Lambda (\rho_A^{1/2} \otimes \id) \leq_\K (\rho_A^{1/2} \otimes \id) J_\Theta (\rho_A^{1/2} \otimes \id).
\end{aligned}\end{equation}
Defining $J_{\Theta'} \coloneqq (\rho_A^{1/2} \otimes \id) J_\Theta (\rho_A^{1/2} \otimes \id)$, we similarly have that $J_{\Theta'} \in \K \iff J_{\Theta} \in \K$. Altogether, this gives
\begin{equation}\begin{aligned}
    R^s_{\KI} ( \Lambda ) + 1 &= \sup_{\rho_A > 0} \min \lset  d_A \Tr J_{\Theta'} \bar (\rho_A^{1/2} \otimes \id) J_\Lambda (\rho_A^{1/2} \otimes \id) \leq_{\K} J_{\Theta'}, \; J_{\Theta'} \in \K \rset\\
    &= \sup_{\rho_A > 0} \min \lset d_A \Tr J_{\Theta'} \bar \idc \otimes \Lambda \left[ (\rho_A^{1/2} \otimes \id) \Phi (\rho_A^{1/2} \otimes \id) \right] \leq_{\K} J_{\Theta'}, \; J_{\Theta'} \in \K \rset\\
    &= \sup_{\rho_A > 0} R^s_\K\left(\idc \otimes \Lambda \left[ (\rho_A^{1/2} \otimes \id) d_A \Phi (\rho_A^{1/2} \otimes \id) \right] \right) +1\, ,
\end{aligned}\end{equation}
where $\Phi$ denotes the maximally entangled state. Since any pure state $\psi_{AA}$ can (up to an inconsequential local unitary on the second system) be written as $(\rho_A^{1/2} \otimes \id) d_A \Phi (\rho_A^{1/2} \otimes \id)$ for some $\rho_A$ and, conversely, any state $\rho_A$ can be purified to a state $\psi_{AA}$, we get
\begin{equation}\begin{aligned}
R^s_{\KI} ( \Lambda ) + 1 &= \max_{\psi \in \D(\HH_A \otimes \HH_A)} R^s_\K\left([\idc \otimes \Lambda](\psi)\right).
\end{aligned}\end{equation}
The proof is concluded by noting that the convexity of $\K$ ensures that $R^s_\K\left([\idc \otimes \Lambda](\psi)\right)$ is convex in $\psi$, which means that we can equivalently optimise over all states $\rho \in \D(\HH_A \otimes \HH_A)$ as the maximum will anyway be achieved at an extreme point $\psi$.
\end{proof}

The final step is to extend this relation to infinite-dimensional spaces. The first part of the proof is a standard argument based on finite-dimensional approximations of infinite-dimensional channels~\cite{shirokov_2008}, where we employ in particular a normalised, trace-preserving construction found e.g.\ in~\cite{Alex2017} in order to avoid normalisation issues. The second part of the proof relies on the lower semicontinuity that we have shown in Lemma~\ref{channel_robustness_lsc_lemma}.

\begin{proof}[Proof of Lemma~\ref{lem:rob_psi_ineq2}]

Let $\{\Pi_k\}_{k\in \N}$ and $\{\Pi'_k\}_{k\in \N}$ be increasing sequences of finite-rank orthogonal projectors which converge strongly to the identity operator on $\HH_A$ and $\HH_B$, respectively. For any channel $\Lambda : A \to B$, we define the maps
\begin{equation}\begin{aligned}\label{eq:lambda_k}
    \Lambda_k(X) \coloneqq \Pi'_k \Lambda ( \Pi_k X \Pi_k) \Pi'_k + \Tr \left[(\id - \Pi'_k)\, \Lambda(\Pi_k X \Pi_k) \right] \omega,
\end{aligned}\end{equation}
where $\omega \in \T(\HH_B)$ is some fixed state %chosen arbitrarily.
satisfying $\omega \leq \Pi'_k$ for all sufficiently large $k$, but otherwise arbitrary. Equivalently, this means that $\supp(\omega)\subseteq \supp(\Pi'_k)$ for some $k$, and hence for all $k'\geq k$.
It is not difficult to see~\cite{Alex2017} that the maps $\Lambda_k$ converge to $\Lambda$ in the topology of strong convergence --- specifically, for any $X \in \T(\HH_A)$, it holds that
\begin{equation}\begin{aligned}
    \lim_{k \to \infty} \left\| \Lambda_k(X) - \Lambda(X) \right\|_1 = 0\, .
    \label{Lambda_k_strong_operator_convergence}
\end{aligned}\end{equation}
Our strategy will now be to show that
\begin{equation}\begin{aligned}
    \limsup_{k \to \infty} \sup_{\rho\, =\, \id\otimes \Pi_k\, \rho\, \id\otimes \Pi_k} R^s_\K\left(\left[\idc \otimes \Lambda_k\right] (\rho)\right) \textleq{(i)} \sup_{\rho} R^s_\K\left(\left[\idc \otimes \Lambda\right] (\rho)\right) \textleq{(ii)} R^s_{\KI} (\Lambda) \textleq{(iii)} \liminf_{k \to \infty} R^s_{\KI} (\Lambda_k) ,
\end{aligned}\end{equation}
and use the finite-dimensional result of Lemma~\ref{lem:rob_channels_psi_finite} to conclude that equality holds between the leftmost and rightmost terms, since each $\Lambda_k$ can be equivalently understood as a map between finite-dimensional spaces.

We begin with the leftmost inequality~(i). Clearly, constraining to finite-dimensional input states $\rho$ such that %$\rho = \Pi_k \rho \Pi_k$
$\rho = \id\otimes \Pi_k\, \rho\, \id\otimes \Pi_k$ can only decrease the value of $\sup_{\rho} R^s_\K\left(\left[\idc \otimes \Lambda\right](\rho)\right)$. For any such state $\rho$, consider then any feasible solution for $R^s_\K\left(\left[\idc \otimes \Lambda\right](\rho)\right)$, that is, states $\sigma, \sigma' \in \K$ such that $\left[\idc \otimes \Lambda\right](\rho) + \lambda \sigma = (1+\lambda) \sigma'$. Note then that, since $\rho =  \id\otimes \Pi_k\, \rho\, \id\otimes \Pi_k$, it holds that $\idc \otimes \Lambda_k(\rho) = (\idc \otimes \Phi)\circ(\idc \otimes \Lambda)(\rho)$, where $\Phi(X) \coloneqq \Pi'_k X \Pi'_k + \Tr\left[(\id - \Pi'_k)X\right] \omega$. The crucial observation is that $\idc \otimes \Phi$ is a $\K$-preserving channel: simply projecting with a local projection $\id \otimes \Pi'_k$ cannot generate entanglement or non-positive partial transpose, and the measure-and-prepare map $X \mapsto \Tr\left[(\id - \Pi'_k)X\right] \omega$ is $\K$-enforcing, so by convexity of $\K$ we have that $\sigma \in \K \Rightarrow \idc \otimes \Phi(\sigma) \in \K$. This gives
\begin{equation}\begin{aligned}
    [\idc \otimes \Lambda_k](\rho) &= (1+\lambda) \underbrace{[\idc \otimes \Phi](\sigma')}_{ \in \K_{AB}} - \lambda \underbrace{ [\idc \otimes \Phi] (\sigma)}_{ \in \K_{AB}},
\end{aligned}\end{equation}
which constitutes a feasible solution for the robustness of $\idc \otimes \Lambda_k(\rho)$. Thus
\begin{equation}\begin{aligned}
    \sup_\rho R^s_\K\big([\idc \otimes \Lambda](\rho)\big) &\geq \sup_{\rho = \id \otimes \Pi_k \rho \id \otimes \Pi_k} R^s_\K\big([\idc \otimes \Lambda](\rho)\big)\\
    &\geq \sup_{\rho = \id \otimes \Pi_k \rho \id \otimes \Pi_k} R^s_\K\big([\idc \otimes \Lambda_k](\rho)\big)
\end{aligned}\end{equation}
for any $k$, from which inequality~(i) follows.

We now move on to inequality~(ii). 
Consider any feasible solution for $R^s_{\KI} ( \Lambda )$, that is, take any pair of channels $\Gamma, \Theta \in \KI$ such that $\Lambda + \lambda \Gamma = (1+\lambda) \Theta$. Then, for any $\rho$ we have that
\begin{equation}\begin{aligned}
    \left[\idc \otimes \Lambda\right] (\rho) &= (1+\lambda) [\idc \otimes \Theta](\rho) - \lambda [\idc \otimes \Gamma](\rho)\\
    &= (1+\lambda) \sigma - \lambda \sigma'
\end{aligned}\end{equation}
for some $\sigma, \sigma' \in \K^1$ by definition of $\KI$. This then gives $R^s_\K([\idc \otimes \Lambda] (\rho)) \leq \lambda$.
As this holds for any input state $\rho$ and any feasible $\lambda$, we get the desired inequality.

Finally, inequality~(iii) is just the lower semicontinuity of $R^s_\KI$ established in Lemma~\ref{channel_robustness_lsc_lemma}. To see that this is applicable, observe that the strong operator convergence in~\eqref{Lambda_k_strong_operator_convergence} implies in particular that $\Lambda_k \tends{w*o}{k\to\infty} \Lambda$. This concludes the proof.
\end{proof}

\begin{rem}
All of the considerations of this section, and in particular the main result of Lemma~\ref{lem:rob_psi_ineq2}, can be analogously applied to another resource measure closely related to the robustness $R^s_\KI$: the \emph{generalised robustness} $R^g_\KI$, defined as
\begin{equation}\begin{aligned}
    R^g_{\KI} ( \Lambda ) \coloneqq \inf \left\{ \lambda \,:\, \Lambda + \lambda \Gamma \in (1+\lambda) \KI,\; \Gamma \in \mathrm{CPTP}  \right\},
\end{aligned}\end{equation}
where now $\Gamma$ is not required to be a $\K$-enforcing channel. 
Indeed, the finite-dimensional variant of this result (analogous to our Lemma~\ref{lem:rob_channels_psi_finite}) appeared already in~\cite[Lemma~17]{regula_2020-2}. An extension of this finding to infinite-dimensional spaces, including a proof that $R^g_\KI$ is weak*-operator lower semicontinuous, can be obtained in direct analogy with our Lemmas~\ref{channel_robustness_lsc_lemma} and \ref{lem:rob_psi_ineq2}.
\end{rem}

%\bigskip
\section*{acknowledgments}

We thank Marco Tomamichel for enlightening discussions. L.L.\ was supported by the Alexander von Humboldt Foundation. B.R.\ was supported by the Japan Society for the Promotion of Science (JSPS) KAKENHI Grant No.\ 21F21015 and the JSPS Postdoctoral Fellowship for Research in Japan.

\bibliographystyle{apsrev4-1a}
\bibliography{biblio}

\end{document}